\title{Communication Complexity of Private Simultaneous Quantum Messages Protocols}
\titlerunning{Communication Complexity of PSQM Protocols} 
\author{Akinori Kawachi}{Graduate School of Engineering, Mie University, Japan \and \url{https://sites.google.com/site/akinorikawachi/} }{kawachi@info.mie-u.ac.jp}{https://orcid.org/0000-0001-9218-9944}{JSPS Grant-in-Aid for Scientific Research (A) Nos.~16H01705, 21H04879, (B) No.~17H01695, JSPS Grant-in-Aid for Young Scientists (B) No.~17K12640, and MEXT Quantum Leap Flagship Program (MEXT Q-LEAP) Grant Number JPMXS0120319794.}
\author{Harumichi Nishimura}{Graduate School of Informatics/Institute for Advanced Study, Nagoya University, Japan}{hnishimura@i.nagoya-u.ac.jp}{https://orcid.org/0000-0002-2219-3320}{JSPS Grant-in-Aid 
for Scientific Research (A) Nos.~16H01705, 21H04879, (B) No.~19H04066, Grant-in-Aid for Transformative Research Areas No.~20H05966 and MEXT Quantum Leap Flagship Program (MEXT Q-LEAP) Grant Number JPMXS0120319794.}
\authorrunning{A. Kawachi and H. Nishimura} 
\keywords{Communication complexity, private simultaneous messages, quantum protocols, secure multi-party computation} 
\begin{document}

\maketitle

\begin{abstract}
The private simultaneous messages (PSM) model is a non-interactive version of the multiparty secure computation (MPC), which has been intensively studied to examine the communication cost of the secure computation.
We consider its quantum counterpart, the {\em private simultaneous quantum messages (PSQM)} model, and examine the advantages of quantum communication and prior entanglement of this model.

In the PSQM model, $k$ parties $P_1,\ldots,P_k$ initially share a common random string (or entangled states in a stronger setting), and they have private classical inputs $x_1,\ldots, x_k$. 
Every $P_i$ generates a quantum message  
from the private input $x_i$ and the shared random string (entangled states), and then sends it  
to the referee $R$.
Receiving the messages from the $k$ parties, 
$R$ computes $F(x_1,\ldots,x_k)$ from the messages.
Then, $R$ learns nothing except for $F(x_1,\ldots,x_k)$ as the privacy condition.

We obtain the following results for this PSQM model.
($i$) We demonstrate that the privacy condition inevitably increases the communication cost in the two-party PSQM model as well as in the classical case presented by Applebaum, Holenstein, Mishra, and Shayevitz [{\em Journal of Cryptology} {\bf 33}(3), 916--953 (2020)]. In particular, we prove a lower bound $(3-o(1))n$ of the communication complexity in PSQM protocols with a shared random string for random Boolean functions of $2n$-bit input, which is larger than the trivial upper bound $2n$ of the communication complexity without the privacy condition.
($ii$) We demonstrate a factor two gap between the communication complexity of PSQM protocols with shared entangled states and with shared random strings by designing a multiparty PSQM protocol with shared entangled states for a total function that extends the two-party equality function.
($iii$) We demonstrate an exponential gap between the communication complexity of PSQM protocols with shared entangled states and with shared random strings for a two-party {\em partial} function.
\end{abstract}

\newpage

\section{Introduction}

{\bf Background.} 
Communication complexity has been an important research area in theoretical computer science for more than four decades, aiming to understand the communication cost of computing functions in a distributed manner~\cite{Yao79,KN97book}.
Since the advent of quantum information science, quantum communication complexity has also been studied intensively 
to determine the advantage of quantum information processing over its classical counterparts. 
A number of studies have succeeded in demonstrating the quantum advantages from the early days of quantum complexity theory~\cite{BCW98STOC,Raz99STOC,BCWW01PRL}.

Recently, much attention has been given to studying the amount of communication overhead required to preserve privacy in the field of cryptography, particularly, multi-party secure computation (MPC), to explore the optimal communication cost for privacy from the viewpoint of communication complexity~\cite{DPP16,DLN19}. 
MPC is commonly based on a general network model that has complex communication patterns (e.g., in which each of many parties can freely interact with the other parties bidirectionally) unlike standard models in communication complexity (e.g., in which two parties can exchange messages only with each other). Therefore, many studies have focused on a special class of MPC that has simpler communication patterns, such as {\em private simultaneous messages}\ (PSM) protocols~\cite{FKN94,IK97,BIKK14,BKN18,AHMS20JCRYPT}.

The two-party version of the PSM model was first proposed by Feige, Kilian, and Naor~\cite{FKN94}, and was later extended by Ishai and Kushilevitz~\cite{IK97}.
In the general setting of the PSM model, we consider $k$ parties $P_1,\ldots,P_k$ and a unique referee $R$.
The party $P_i$ has its private input $x_i$, and all parties share a common random string $r$. Each $P_i$ generates message $m_i$ from $x_i$ and $r$, and then, sends $m_i$ to the referee $R$ only once. Note that each party is not allowed to interact with other parties.
The referee $R$ receives the messages $m_1,\ldots,m_k$, and computes an output value of a predetermined function $F$.
The protocol generally has two properties {\em correctness}\ and {\em privacy}: 
Correctness signifies that the referee can compute $F(x_1, \ldots, x_k)$ correctly from the messages $m_1, \ldots, m_k$, 
while privacy signifies that the referee $R$ learns nothing except for $F(x_1, \ldots, x_k)$ from the received messages $m_1, \ldots, m_k$ in the information-theoretical sense.

In fact, the communication model of PSM protocols coincides with simultaneous message passing (SMP) protocols, 
which are known as traditional communication models in communication complexity~\cite{Yao79,KN97book}.
In the (number-in-hand) SMP model, $k$ parties $P_1, \ldots, P_k$ that share a common random string $r$ (and sometimes entangled states), send their messages $m_1, \ldots, m_k$ computed from individual inputs $x_1, \ldots, x_k$ and the referee computes $F(x_1, \ldots, x_k)$ from $m_1, \ldots, m_k$, as performed in the PSM model. Note that the SMP model does not require the privacy condition unlike the PSM model. The communication complexity of SMP protocols has been widely studied from the viewpoint of classical/quantum information to demonstrate the power of quantum communication~\cite{BCWW01PRL,GKRW09SICOMP,Gav20}. 

Two-party quantum SMP models were first studied by Buhrman, Cleve, Watrous, and de Wolf~\cite{BCWW01PRL} in the setting that the two parties do not share any randomness or entanglement. 
In this model, they demonstrated that the quantum communication complexity of the equality function is exponentially smaller than in the classical case. This result has been strengthened in the literature~\cite{BYJK08,Gav19}, and Gavinsky~\cite{Gav20} demonstrated that there is a relational problem whose quantum communication complexity is exponentially smaller than that of the two-way classical communication model.
However, the power of shared entanglement in the SMP model is unclear. In one of the few related studies, Gavinsky, Kempe, Regev, and de Wolf~\cite{GKRW09SICOMP} demonstrated that there is a relational problem that has an exponential gap between quantum SMP models with shared entanglement and without shared entanglement. However, the known maximum gap between them for {\em total functions} is only a constant multiplicative factor of $2$~\cite{HSW+05,KK12}. 

Although various studies have examined quantum versions of MPC so far (e.g., \cite{CGS02,Unr10,DGJMS20}), to the best of the authors' knowledge, there has been no attempt to analyze quantum communication complexity under the privacy condition in a cryptographic setting, and such analysis is important to understand the advantages of quantum communication in a cryptographic setting.

{\bf Contributions.} 
In this paper, we examine the power of quantum communication and shared entanglement under the information-theoretical privacy condition based on a standard communication model, namely, the PSM (or, equivalently, SMP) model. In particular, we propose a quantum counterpart of the classical PSM model called {\em private simultaneous quantum messages (PSQM)} model.
In the PSQM model, parties $P_1,\ldots,P_k$ which have classical private inputs $x_1,\ldots,x_k$ share a common random string or entangled states in advance, and can send quantum messages to a quantum referee, $R$. Then, $R$ computes a classical output value $F(x_1,\ldots,x_k)$ for a given function $F$.

In the PSM (and its related) model, there are few results on lower bounds 
of communication complexity \cite{AHMS20JCRYPT,DPP16,BHILM20}. 
As one of such results, Applebaum, Holenstein, Mishra, and Shayevitz~\cite{AHMS20JCRYPT} proved 
a lower bound $(3-o(1))n$ of the communication complexity for random functions $F_n: \bit^n\times\bit^n\rightarrow\bit$ in the PSM model. In contrast, 
every function has the trivial upper bound $2n$ in the SMP model 
(i.e., the PSM model without the privacy condition). 
This result implies that the privacy condition creates communication overhead in the PSM model for some functions.
Our first result demonstrates that this communication overhead is inevitable even if the parties can send quantum messages as in the PSQM model.
\begin{theorem}\label{thm:lowerbound}
For a $(1-o(1))$ fraction of functions $F_n:\bit^n\times\bit^n\rightarrow\bit$,
the communication complexity of two-party PSQM protocols with shared randomness for $F_n$ is at least $3n-2\log{n}-O(1)$.
\end{theorem}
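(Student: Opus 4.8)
The plan is to follow the template of the classical lower bound of Applebaum, Holenstein, Mishra, and Shayevitz --- a counting argument against the total number of Boolean functions --- but recast entirely in terms of the density operators produced by the protocol, using an $\epsilon$-net to discretize the quantum continuum.

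First I would extract the structural consequences of privacy. Fix a PSQM protocol with shared randomness $r$ in which $P_1$ sends $a$ qubits and $P_2$ sends $b$ qubits, $c=a+b$, and on randomness $r$ let $\rho_x^r,\sigma_y^r$ be the two messages, so that the referee receives $\mathbb{E}_r[\rho_x^r\otimes\sigma_y^r]$. Privacy forces this state to depend only on the output bit: there are two fixed operators $\tau_0,\tau_1$ on $c$ qubits with $\mathbb{E}_r[\rho_x^r\otimes\sigma_y^r]=\tau_{F_n(x,y)}$ (for approximate privacy, within trace distance). Tracing out one party and using that for a generic $F_n$ every row and column of the truth table is non-constant yields \emph{input-independent marginals}, $\mathbb{E}_r\rho_x^r=\alpha$ and $\mathbb{E}_r\sigma_y^r=\beta$ for all $x,y$ --- the quantum analogue of ``each single message leaks nothing about its own input.'' Correctness then reduces to the single statement that $\tau_0$ and $\tau_1$ are distinguishable with constant bias.

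Second, I would bound the number of $F_n:\bit^n\times\bit^n\to\bit$ admitting such a protocol with communication $c$ and show it is $o\!\left(2^{2^{2n}}\right)$ once $c<3n-2\log n-O(1)$. Since the messages are quantum, I would pass to an $\epsilon$-net of density operators on $a$ (resp.\ $b$) qubits together with a netted decoding measurement; because correctness holds with a constant gap, the referee's Boolean output on each cell is $O(\epsilon)$-robust, so two protocols whose netted data agree compute the same $F_n$, and hence the number of computable functions is at most the number of net points. The privacy structure is exactly what should keep this net manageable: instead of the full input-indexed ensembles, the data that reconstruct $F_n$ are essentially $\tau_0,\tau_1$, the decoder, and the combinatorial pattern by which the product encoders route each cell to one of the two states --- the operator analogue of the classical rectangle certificate $F_n(x,y)=D(u(x),v(y))$. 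Comparing the resulting count $2^{N(c)}$ against the $2^{2^{2n}}$ total functions and solving $N(c)\gtrsim 2^{2n}$ is what should produce the threshold $c=3n-2\log n-O(1)$, with the $2\log n$ loss and the $1-o(1)$ fraction coming from the net granularity and a union bound.

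The main obstacle is twofold. The shared randomness $r$ is a priori unbounded, so a net built naively over the per-$r$ encoders has size depending on $|\mathrm{supp}(r)|$; I would neutralize this by first reducing to a bounded number of random strings (the referee's averaged state is already $\tau_{F_n(x,y)}$, independent of $r$, which lets one replace $r$ by a net of joint message distributions before discretizing). The sharper difficulty, and where I expect the real work to lie, is matching the constant $3$ rather than $2$: a single-transcript certificate accounts only for correctness and is too weak --- for quantum messages it does not even recover the classical $2n$ baseline, since a density operator on $a$ qubits carries $\sim 4^a$ parameters --- so the privacy condition must be used in an essential, quantitative way. Concretely, I would try to show that input-independence $\mathbb{E}_r\rho_x^r=\alpha$ forbids the encoders from being simultaneously informative enough for distinguishing $\tau_0$ from $\tau_1$ and hiding enough to pass privacy without an extra $n$ qubits, faithfully transporting the AHMS counting of private protocols into the operator setting. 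I expect this to be where the argument demands the most care and where the slack $-2\log n$ (rather than the classical $-o(1)\cdot n$) originates.
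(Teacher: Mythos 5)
Your starting point misidentifies the engine of the classical proof you are trying to quantize. The Applebaum--Holenstein--Mishra--Shayevitz bound is not a counting argument over the $2^{2^{2n}}$ Boolean functions; it fixes a \emph{single} function $F_n$ with two combinatorial properties (non-degeneracy, and the property that every pair of similar disjoint rectangles has size at most $2^n\cdot n^2$, which a $(1-o(1))$ fraction of random functions enjoys) and then shows directly that any perfectly private, perfectly correct protocol for that $F_n$ has small collision probability between two independent executions; since the collision probability is at least the inverse of the size of the message domain, this yields the lower bound. The paper's quantum proof follows the same route, replacing the collision probability by the purity $\mathrm{tr}\,\rho^2$ of the averaged message state $\rho=\sum_{x_1,x_2}\mu(x_1,x_2)\rho(x_1,x_2)$, which is at least $1/\dim(\mathcal{M})$. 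Privacy (the received state depends only on $F_n(x_1,x_2)$) together with exact correctness (which forces $\rho_0\rho_1=0$) reduces the purity to the off-diagonal collision terms at the cost of a factor $\beta(\mu)^{-1}$; non-degeneracy gives a ``weighted unique collision'' lemma, $\sum_{z\neq x_1}\lvert\langle\psi_1(x_1;r)|\psi_1(z;r')\rangle\rvert^2\le 1$; and the colliding input pairs are organized into similar disjoint rectangles whose measure is at most $\alpha(\mu)$. The resulting bound is $\log(\alpha(U)^{-1})+H_\infty(U)-\log(\beta(U)^{-1})-1=(n-2\log n)+2n-O(1)$. In particular, the $2\log n$ loss comes from the rectangle-size bound $2^n\cdot n^2$ for random functions, not from net granularity, and the $(1-o(1))$ fraction comes from that combinatorial property of random functions, not from a union bound over a net.

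Your own plan stalls exactly where you say it does. A net over density operators on $a$ qubits costs roughly $4^a\log(1/\epsilon)$ bits per message state, so matching the $2^{2^{2n}}$ functions only forces $a\gtrsim n/2$ per party, i.e., total communication about $n$ --- below even the trivial $2n$ --- and you offer no concrete mechanism by which privacy would inflate this count to $3n$; ``faithfully transporting the AHMS counting of private protocols'' cannot be done because no such counting exists in the classical argument. Your first-paragraph structural observations (the referee's state equals $\tau_{F_n(x,y)}$, and the marginals are input-independent for non-degenerate $F_n$) are correct and do appear in the paper's proof, but the three ingredients that actually produce the constant $3$ --- the purity as a quantum collision measure, the similar-disjoint-rectangle bound for random functions, and the weighted-collision lemma derived from non-degeneracy and the projectivity of the referee's measurement --- are all absent from the proposal, so the claimed threshold is never actually derived.
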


We also present a multiparty PSQM protocol for a total function that reduces the amount of quantum communication by half under the condition that the parties share entanglement compared to the case in which they do not share entanglement. 

\begin{theorem}\label{thm:upperbound}
For any even $n$ and $k$, there is a total function $F_n:(\{0,1\}^n)^k\rightarrow\{0,1\}$ such that the communication complexity of the $k$-party PSQM protocol with shared entanglement is at most $kn/2$, while that without shared entanglement is $kn$. 
\end{theorem}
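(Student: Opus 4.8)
The plan is to exhibit a single function and to prove matching upper and lower bounds separately, with the whole gap coming from superdense coding. I take the generalized equality $F_n(x_1,\dots,x_k)=1$ iff $\bigoplus_{i=1}^{k} x_i = 0^n$, which for $k=2$ is exactly the equality predicate $[x_1=x_2]$ and is symmetric in the parties.

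\textbf{Upper bound ($\le kn/2$ with entanglement).} First I would build a classical PSM. Identify $\{0,1\}^n$ with $\mathbb{F}_{2^n}$; the shared randomness is a uniform $r\in\mathbb{F}_{2^n}^{*}$ together with masks $s_1,\dots,s_k$ subject to $\sum_i s_i = 0$. Party $P_i$ sends $m_i = r x_i + s_i$, and the referee forms $\sum_i m_i = r\sum_i x_i$, which is $0$ iff $F_n=1$ and is uniform over nonzero field elements otherwise. Since the first $k-1$ messages are uniform and independent while the last is fixed by their sum and by $r\sum_i x_i$, the referee's view depends on the inputs only through $F_n$, so privacy holds. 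This is an $n$-bit-per-party classical PSM, hence a $kn$-qubit PSQM without entanglement (send each classical bit as a qubit), which already gives the stated $kn$ upper bound in the no-entanglement model. To halve the cost I let the referee pre-share $n/2$ EPR pairs with each $P_i$ and have $P_i$ transmit the $n$-bit string $m_i$ by superdense coding on those pairs, i.e.\ $n/2$ qubits. The referee's Bell measurements recover every $m_i$ exactly, so correctness and privacy are inherited verbatim from the classical protocol and the total cost is $kn/2$.

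\textbf{Lower bound ($\ge kn$ without entanglement).} Since the classical bound of Applebaum--Holenstein--Mishra--Shayevitz~\cite{AHMS20JCRYPT}, and even plain correctness for equality, are cheap in the SMP model, the bound must genuinely use privacy. Step one is a quantum one-time-pad observation: writing $P_i$'s message as $\sigma^{i}_{x_i,r}$ and noting that for any fixed $x_i$ the value of $F_n$ can be forced to be either $0$ or $1$ by the remaining inputs, privacy forces the averaged marginal $\mathbb{E}_r\,\sigma^{i}_{x_i,r}$ to be independent of $x_i$. Step two reduces the $k$-party bound to a per-party one: grouping $P_2,\dots,P_k$ into a single virtual sender of $c=\bigoplus_{i\ge 2} x_i$ turns each slice into a two-party equality PSQM for $[x_1=c]$, so by symmetry it suffices to show that each party sends at least $n$ qubits for two-party equality. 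Step three extracts rigidity from perfect correctness: the two referee states $\rho_0$ and $\rho_1$ then have orthogonal supports, and because $\mathbb{E}_r[\sigma^{1}_{x,r}\otimes\sigma^{2}_{y,r}]=\rho_0$ for every $x\ne y$, the accept projector $\Pi_1$ must, for almost every fixed $r$ and simultaneously over all $x,y$, reject $\sigma^{1}_{x,r}\otimes\sigma^{2}_{y,r}$ and accept $\sigma^{1}_{x,r}\otimes\sigma^{2}_{x,r}$. For a typical $r$ this yields two biorthogonal families $\{\sigma^{1}_{x,r}\}_x$ and $\{\sigma^{2}_{y,r}\}_y$ under the bilinear form $(A,B)\mapsto\mathrm{Tr}[\Pi_1(A\otimes B)]$.

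The main obstacle is converting this rigidity into the correct dimension bound. Biorthogonality in the operator space of $P_1$'s message only yields $2^n\le d_1^{2}$, i.e.\ $n/2$ qubits per party, which is exactly the entangled cost: it shows the entangled bound is tight but does \emph{not} separate the two models. The crux is to recover the extra factor two from the \emph{absence} of shared entanglement, where each party's $q_i$-qubit message is an ordinary state on a $2^{q_i}$-dimensional space rather than one half of an entangled pair, so a qubit carries at most one bit (not two, as superdense coding would allow) toward the biorthogonal structure. I expect to formalize this by a Holevo/Nayak-type argument bounding the size of a biorthogonal family produced by a non-entangled $q_1$-qubit message by $2^{q_1}$, forcing $q_1\ge n$. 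Handling bounded-error correctness (via an approximate-support or smooth-rank version) and, most delicately, arbitrary shared randomness---which defeats naive rank counting because it inflates every bound by the randomness size $|R|$---will be the technical heart, and I plan to control it by adapting the combinatorial and rank techniques of Applebaum et al.~\cite{AHMS20JCRYPT} to the non-entangled quantum setting.
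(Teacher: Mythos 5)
There are two genuine gaps, one in each direction of the bound.

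\textbf{Upper bound.} Your protocol has the referee pre-share $n/2$ EPR pairs with each party and then uses standard superdense coding. But in the PSQM model of this paper the pre-shared entangled state $\ket{\Phi}$ is distributed \emph{among the $k$ parties only}; the referee holds no part of it. (If referee--party entanglement were allowed, halving any classical PSM by superdense coding would be trivial and the factor-two gap would carry no content.) So your encoding step is outside the model, and the classical PSM you correctly construct ($m_i = rx_i + s_i$ with $\sum_i s_i = 0$) only gives the $kn$ upper bound. The paper achieves $kn/2$ with sender-only entanglement: the parties share $n/2$ copies of the $k$-qubit GHZ state, each party applies $X$ and $Z$ on its GHZ qubits controlled by two (masked) input bits, and the referee measures each $k$-qubit block in a GHZ-type basis that reveals only the two parities $\bigoplus_j a_j^{2i-1}$ and $\bigoplus_j a_j^{2i}$; privacy comes from the zero-sum masks $r^i$ together with multiplication of the input by a random nonzero element of $\mathbb{F}_{2^n}^*$. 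This ``distributed'' superdense coding is the nontrivial ingredient your proposal is missing.

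\textbf{Lower bound.} Your premise that ``the bound must genuinely use privacy'' is incorrect: the theorem concerns \emph{exact} protocols, and the exact one-way quantum communication complexity of $EQ_n$ without shared entanglement is already $n$ (Klauck; it equals the logarithm of the number of distinct rows of the communication matrix). The paper simply fixes all but one party's input ($0^n$ for $k-2$ of them) to reduce to two-party one-way equality and concludes each party must send $n$ qubits --- no privacy, no rigidity, no Holevo/Nayak argument needed. Your biorthogonality route stalls exactly where you say it does (it gives $2^n \le d_1^2$, i.e.\ only $n/2$ qubits per party), and the ``extra factor two'' step you defer is the entire content of the bound; as written the lower bound is not established. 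The worries about bounded-error correctness and about shared randomness inflating rank counts are also moot here, since the statement is for exact, perfectly correct protocols and the distinct-rows argument is insensitive to shared randomness.
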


Actually, this function matches the equality function for the two-party case. It is known that for the equality function, the two-party quantum SMP model with shared entanglement reduces the amount of quantum communication by half compared to the corresponding model without shared entanglement (e.g.~\cite{HSW+05}). Our result implies that this reduction still holds even if the privacy condition is required.  

Moreover, we present a two-party PSQM protocol with shared entanglement for a {\em partial} function that reduces the amount of quantum communication exponentially compared to the case in which the parties do not share entanglement. 

\begin{theorem}\label{thm:exp-gap-DJ}
There is a partial function $F_n:\{0,1\}^n\times\{0,1\}^n\rightarrow\{0,1\}$ 
such that the communication complexity of the PSQM protocol with shared entanglement is $O(\log n)$ while that without entanglement is $\Omega(n)$.
\end{theorem}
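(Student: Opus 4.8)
The plan is to instantiate $F_n$ as the distributed Deutsch--Jozsa function: on inputs $x,y\in\bit^n$ promised to satisfy $\Delta(x,y)\in\{0,n/2\}$ (Hamming distance), set $F_n(x,y)=0$ if $x=y$ and $F_n(x,y)=1$ if $\Delta(x,y)=n/2$, leaving $F_n$ undefined otherwise. For the entangled upper bound, let the parties share the maximally entangled state $|\psi\rangle=\frac{1}{\sqrt n}\sum_{i=1}^n |i\rangle_A|i\rangle_B$ on two $\lceil\log n\rceil$-qubit registers, together with a uniformly random permutation $\pi\in S_n$ as classical shared randomness. $P_1$ relabels its basis by $\pi$ and applies the phase $(-1)^{x_i}$ to $|\pi(i)\rangle_A$, while $P_2$ does the same with $(-1)^{y_i}$ on the $B$-register. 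Each party sends its register to $R$, for a total of $2\lceil\log n\rceil=O(\log n)$ qubits, and $R$ performs the two-outcome measurement $\{|\psi\rangle\langle\psi|,\,I-|\psi\rangle\langle\psi|\}$, outputting $0$ on the first outcome.

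I would then check correctness and privacy together. Writing $s=x\oplus y$, the post-phase state is $\frac{1}{\sqrt n}\sum_j(-1)^{s_{\pi^{-1}(j)}}|j\rangle_A|j\rangle_B=|\psi_t\rangle$ with $t_j=s_{\pi^{-1}(j)}$, where $|\psi_t\rangle=\frac{1}{\sqrt n}\sum_j(-1)^{t_j}|j\rangle_A|j\rangle_B$. If $x=y$ then $s=0$ and the state is exactly $|\psi\rangle$ for every $\pi$, so $R$ outputs $0$ with certainty and sees the fixed state $|\psi\rangle$, independent of $x$. If $\Delta(x,y)=n/2$ then $\langle\psi|\psi_t\rangle=\frac1n\sum_i(-1)^{t_i}=0$ for every weight-$n/2$ string $t$, so $R$ outputs $1$ with certainty; moreover, averaging over $\pi$ makes $t$ uniform over all weight-$n/2$ strings, so $R$'s state is the fixed mixture $\rho^1=\sum_{t:\,|t|=n/2}\frac{1}{\binom{n}{n/2}}|\psi_t\rangle\langle\psi_t|$, independent of $(x,y)$. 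Hence the two output values induce two fixed referee states, giving perfect correctness and perfect privacy with $O(\log n)$ communication.

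For the $\Omega(n)$ lower bound without shared entanglement, the starting point is that, conditioned on the shared random string $r$, the two messages are prepared independently, so the state reaching $R$ is separable across the $P_1/P_2$ cut: $\rho_{xy}=\sum_r p_r\,\sigma^A_{x,r}\otimes\sigma^B_{y,r}$, where $\sigma^A_{x,r},\sigma^B_{y,r}$ act on $c_A$- and $c_B$-qubit registers with $c_A+c_B=c$ the communication. Perfect privacy forces $\rho_{xy}$ to depend only on $F_n(x,y)$, so $\rho_{xy}=\rho^0$ whenever $x=y$ and $\rho_{xy}=\rho^1$ whenever $\Delta(x,y)=n/2$; tracing out one register further shows each party's averaged message $\sum_r p_r\,\sigma^A_{x,r}$ is independent of its input. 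Correctness requires $\rho^0$ and $\rho^1$ to be distinguishable by a constant margin, and the goal is to show these constraints are incompatible with $c=o(n)$.

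The hard part, and the crux of the theorem, is turning ``separable plus privacy'' into a dimension bound. I expect the key point to be that a separable $\rho^0$ cannot play the role of the entangled $|\psi\rangle$: the entangled protocol exploited that $|\psi\rangle$ is simultaneously orthogonal to all $\binom{n}{n/2}$ ``far'' states, and I would argue that realizing an analogous simultaneous separation with a mixture of product states indexed by shared randomness requires the message registers to span $2^{\Omega(n)}$ dimensions, i.e.\ $c=\Omega(n)$. Concretely, I would fix $P_2$'s input, study the ensemble $\{\sigma^A_{x,r}\}_x$ against the averaged constraints, and try either (a) a rank or operator-norm estimate on the separable decomposition of $\rho^0-\rho^1$ that is insensitive to the size of the shared-randomness alphabet, or (b) a reduction that measures each quantum message with a fixed informationally-complete POVM to obtain a classical PSM protocol for the same promise with constant overhead, and then invokes an $\Omega(n)$ classical PSM lower bound in the spirit of \cite{AHMS20JCRYPT}. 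The main obstacle in route (b) is controlling the loss in distinguishing advantage under a single measurement outcome while preserving privacy, and the main obstacle in route (a) is that shared randomness can inflate the rank of the acceptance matrix, so the argument must use the full quantum privacy constraint (equality of states, not merely of acceptance probabilities) rather than any purely classical rank bound.
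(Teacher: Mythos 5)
Your entangled upper bound is correct and is essentially the paper's construction in a different guise: the protocol of Lemma~\ref{thm:DJ} also starts from the maximally entangled state and the phase flips $(-1)^{x_i}$, $(-1)^{y_i}$, but then each party applies Hadamards, measures, and sends a \emph{classical} $O(\log n)$-bit message randomized by a shared affine map over $\mathbb{F}_{2^m}$, whereas you keep the registers quantum and obtain privacy from a shared random permutation. Both variants give perfect correctness and perfect privacy with $O(\log n)$ communication, so this half stands (the paper's version additionally shows the messages can be classical, i.e.\ $C^{psm,*}_0(DJ_n)=O(\log n)$, which is slightly stronger than what the theorem needs).

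The lower bound is a genuine gap: you set up the separable structure $\rho_{xy}=\sum_r p_r\,\sigma^A_{x,r}\otimes\sigma^B_{y,r}$ and then explicitly leave the crux unresolved, offering two unexecuted routes. The framing is also off in two respects. First, privacy is not the lever here, and cannot be on its own: perfect privacy collapses the referee's view to just two states $\rho^0,\rho^1$, and perfect distinguishability of two states imposes no dimension bound whatsoever. Second, there is no ``constant margin'' to control, because the claimed separation is in the \emph{exact} setting, and exactness is precisely what the paper's proof of Theorem~\ref{thm:PSM-DJ} exploits: for each fixed value of the shared random string, exact correctness forces $\sigma^A_{x,r}$ and $\sigma^A_{x',r}$ to have orthogonal supports whenever some $y$ has $(x,y)$ and $(x',y)$ both in the promise with $DJ_n(x,y)\neq DJ_n(x',y)$ (two product states sharing a second factor are perfectly distinguishable iff their first factors are). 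From this the paper argues that the exact quantum and exact classical SMP complexities of the partial function $DJ_n$ coincide, by extending the known combinatorial characterization for total functions to partial ones via a distinguishability graph on the rows and columns, and then it invokes the $\Omega(n)$ exact classical communication lower bound for $DJ_n$ from \cite{BCW98STOC}. This is closest in spirit to your route (b), but the quantum-to-classical reduction is driven by perfect distinguishability rather than by an informationally complete POVM, so there is no distinguishing advantage to lose and privacy plays no role at all in the lower bound. Without carrying out some argument of this kind, your proposal does not establish the $\Omega(n)$ side of the theorem.
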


{\bf Related Work.} 
There have been several studies on quantum communication complexity 
with privacy conditions~(e.g.,~\cite{Kla04,GI13}), 
although they differed from a cryptographic setting. 
For example, Gavinsky and Ito~\cite{GI13} considered the SMP model with privacy; however it considered the information leakage of quantum messages when the input was randomly chosen, 
while in the cryptographic setting, privacy should be retained 
for any input.

In a study related to the PSQM model, Brakerski and Yuen~\cite{BY20} constructed a quantum version of decomposable randomized encoding schemes. 
In fact, decomposable randomized encoding is equivalent to the PSM model from a communication-complexity perspective.
They demonstrated how to garble a general quantum circuit on quantum inputs in a decomposable manner via a constant-depth quantum circuit. In contrast, our study focuses on the communication complexity of computing several classical functions on classical inputs in the communication model.

More recently, Morimae~\cite{Mor20} investigated relationships between quantum randomized encoding and other quantum protocols including quantum computing verification and blind quantum computing. For example, he proved that a randomized encoding scheme of the BB84 state generation implies a two-round verification scheme of quantum computing with a classical verifier that additionally performs the encoding operation, and that a quantum randomized encoding scheme with a classical encoding operation implies violation of the no-cloning theorem. His target of quantum randomized encoding schemes is similar to that of~\cite{BY20}, 
that is, encoding for quantum circuits on quantum inputs rather than classical functions on classical inputs.

\section{Preliminaries}

Let $[n]:=\{1,2,\ldots,n\}$. 
For any two $m$-bit strings $x=x_1\cdots x_m$ and $y=y_1\cdots y_m$, the product $x\cdot y$ denotes $\sum_{i\in [m]} x_iy_i~(\mbox{mod}~2)$, 
and $x\oplus y$ denotes the $m$-bit string 
whose $i$th bit is the XOR of $x_i$ and $y_i$.

For any $m$-bit string $x=x_1x_2\cdots x_m$, 
let 
\begin{equation}\label{eq:polynomial-rep}
{\sf p}(x)=x_1+x_2\alpha+\cdots+x_m\alpha^{m-1}~(\mbox{mod}~{\sf q}_m)
\end{equation}
be the corresponding polynomial over $\mathbb{F}_2$, 
where ${\sf q}_m$ is some irreducible polynomial of degree $m$ over $\mathbb{F}_2$. 
Note that ${\sf p}(x)$ is regarded as an element in $\mathbb{F}_{2^m}$, and ${\sf p}$ is a one-to-one correspondence between $\{0,1\}^m\setminus \{0^m\}$ 
and the multiplicative group $\mathbb{F}^*_{2^m}$.

We assume the reader is familiar with the basics of quantum information and computation such as quantum states and quantum operations (see, e.g, \cite{HIK+15,NC00}). According to the standard notations, Pauli gates $X$, $Z$, and the Hadamard gate $H$ denote
\[
X=
\left(
\begin{matrix}
0 & 1\\
1 & 0
\end{matrix}
\right),\ \ Z=
\left(
\begin{matrix}
1 & 0\\
0 & -1
\end{matrix}
\right),\ \ H=
\frac{1}{\sqrt{2}}
\left(
\begin{matrix}
1 & 1\\
1 & -1
\end{matrix}
\right),
\]
respectively.

\subsection{Private simultaneous quantum messages protocols}

Private simultaneous quantum messages (PSQM) protocols are formally defined as follows.

\begin{definition}[private simultaneous quantum messages (PSQM) protocols]
For positive integers $n,k>0$, 
let $n$ be the size parameter and $k$ be the number of parties. 
Let $F_n:\prod_{i=1}^k\XX_{n,i}\rightarrow\YY_n$.
We say that a $(k+1)$-tuple $\Pi=(P_1,\ldots,P_k,R)$ of quantum algorithms is an $\varepsilon$-error $k$-party private simultaneous quantum messages (PSQM) protocol 
if the following holds: given an individual input $x_i\in\XX_{n,i}$ and shared random string $r$ among $P_1,\ldots,P_k$, 
the $i$th party $P_i$ prepares a quantum message, represented as $\rho_i=P_i(x_i,r)$, 
in a Hilbert space $\MM_{n,i}$ 
called a quantum register, 
and sends $\MM_{n,i}$ (or equivalently $\rho_i$) to the party $R$, which is called the referee. 
Then, the following two properties hold: 
\begin{enumerate}
\item {\bf (Correctness)} 
The referee $R$ outputs the classical value $F_n(x_1,\ldots,x_k)\in\YY_n$ using the received joint quantum register $\MM_n:=\bigotimes_{i=1}^k\MM_{n,i}$ 
with a probability of at least $1-\varepsilon$. 
\item {\bf ((Perfect) Privacy)} 
There exists a quantum algorithm $S_n$, 
which is called the simulator, such that the output quantum state $S_n(F_n(x_1,\ldots,x_k))$ is identical to 
the quantum state in $\MM_n$ (before $R$), 
namely, $\otimes_{i=1}^k\rho_i$.
\end{enumerate}
We say that the protocol is exact when $\varepsilon=0$.

If the shared random string $r$ is replaced by a predetermined multipartite entangled quantum state $\ket{\Phi}$ among the $k$ parties, 
we say that $\Pi$ is a PSQM protocol with a shared entangled state $\ket{\Phi}$, 
where the algorithms and the properties are similarly defined except that $P_i$ prepares the message using its own part of $\ket{\Phi}$ (instead of $r$), 
and that the quantum state in $\MM_n$ is not a product state of the $k$ local states in $\MM_{n,1},\ldots,\MM_{n,k}$ any more. 

The communication complexity of $\Pi$ is defined by the total length 
$\log{\dim(\MM_n})$ of the messages.

\end{definition}

Let $C^{psm}_{\varepsilon}(F_n)$ (resp.~$Q^{psm}_{\varepsilon}(F_n)$) be the $\varepsilon$-error classical (quantum) communication complexity of the problem $F_n$ in the PSM (PSQM) model with a shared random string.
Let $C^{psm,*}_{\varepsilon}(F_n)$ (resp.~$Q^{psm,*}_{\varepsilon}(F_n)$) be the $\varepsilon$-error classical (quantum) communication complexity of $F_n$ in the PSM (PSQM) model with shared entangled states (the PSM model with shared entangled states is defined similarly to the PSQM model with shared entangled states except that the messages sent to the referee are restricted to classical strings).

\section{Communication Lower Bounds of Two-Party PSQM Protocols}
\newcommand{\eventP}{\PP^{(=)}}
\newcommand{\eventI}{\II^{(=)}}
\newcommand{\eventnotI}{\II^{(\ne)}}
\newcommand{\eventF}{\FF^{(=)}}
\newcommand{\eventFy}{\FF^{(=y)}}
\newcommand{\eps}{\varepsilon}
\newcommand{\ipHS}[2]{\langle #1,#2 \rangle_{\mathrm{HS}}}

In this section, we present the communication lower bounds 
of random functions for two-party PSQM protocols (Theorem~\ref{thm:lowerbound}).

The proof strategy is based on that of the classical case presented by Applebaum et al.~\cite{AHMS20JCRYPT}.
The proof for the classical case considers 
two independent executions of a PSM protocol.
It then evaluated the upper bounds of the collision probability, that is, the probability that the message in the first execution 
coincides with the one in the second execution, between two independent random messages. 
Because the collision probability is lower-bounded by 
the inverse of the size of the message domain, 
we can obtain the communication lower bound from the upper bound 
of the collision probability. 
Note that this argument is not available for quantum messages 
since they vary infinitely even over a finite number of qubits.

In order to extend the above argument to the case of quantum messages, we use the {\em purity}, 
$\tr\rho^2$, of a quantum message $\rho$ in a PSQM protocol instead of the collision probability. 
In accordance with its name, the purity is originally a measure of how pure a quantum state is.
(For example, any pure state has a purity of $1$, and the $d$-dimensional maximally mixed state 
has $1/d$.)
It is easy to see that the purity of a quantum state $\rho$ is lower-bounded by $1/\dim(\rho)$, 
and thus, we can obtain the communication lower bounds for a PSQM protocol 
by evaluating the upper bound of the purity of 
the quantum messages, similarly to the collision probability for a PSM protocol.

However, the purity of quantum messages is different 
from the collision probability between classical messages; thus, we must further adapt the proof technique in \cite{AHMS20JCRYPT} to the purity. For example, while the collision probability is analyzed by combinatorial techniques in the proof of \cite{AHMS20JCRYPT}, 
we need to analyze the trace $\tr\rho^2$ combinatorially by extending the original proof (Claim~\ref{clm:purity upper bound}). 
Also, the proof technique in \cite{AHMS20JCRYPT} uses 
a unique collision (which is obtained from the property called non-degeneracy that random functions have with high probability) between two messages in two independent executions with any fixed shared random string.
Instead of the unique collision, we consider weighted collisions defined from 
the inner product of two quantum messages and extend the original argument for the weighted collisions 
(Lemma~\ref{lem:weight}).

Before discussing the details of the proof, 
we provide several technical definitions and notation required for the proof of the lower bounds. 
In this section, we denote ${\cal X}_{n,i}$ by ${\cal X}_{i}$. 
We use $\rho(x_1,x_2;r)=\rho_1(x_1;r)\otimes\rho_2(x_2;r)$ to the entire quantum message sent from $P_1$ and $P_2$ on individual inputs $x_1\in{\cal X}_1$ and $x_2\in{\cal X}_2$ with a shared random string $r$ to $R$, where $\rho_1(x_1;r)$ denotes $P_1$'s message and $\rho_2(x_2;r)$ denotes $P_2$'s message. 

Let $\mu$ be a distribution over $\XX_1\times\XX_2$ with marginal distributions $\mu_1$ and $\mu_2$.
We define $\Supp(\mu)$ for a distribution $\mu$ as a set $\set{x: \prob{X\sim\mu}{X=x}>0}$.
We say that function $F_n$ is non-degenerate under distribution $\mu$ 
if for every distinct $x_1\in\Supp(\mu_1)$ and $x^\prime_1\in\Supp(\mu_1)$, 
there exists $x_2\in\Supp(\mu_2)$ such that $F_n(x_1,x_2)\ne F_n(x^\prime_1,x_2)$ 
and for every distinct $x_2\in\Supp(\mu_2)$ and $x^\prime_2\in\Supp(\mu_2)$ 
there exists $x_1$ such that $F_n(x_1,x_2)\ne F_n(x_1,x^\prime_2)$.
We say that $F_n$ is non-degenerate if the above holds when replacing 
$\Supp(\mu_1)$ and $\Supp(\mu_2)$ by $\XX_1$ and $\XX_2$, respectively.

A rectangle $\RR$ of size $k\times\ell$ over $\XX_1\times\XX_2$ is defined as $((x_{1,1},\ldots,x_{1,k}),(x_{2,1},\ldots,x_{2,\ell}))$, where 
 $x_{1,i}\in\XX_1,x_{2,j}\in\XX_2$ for every $i,j$,  $x_{1,i}\ne x_{1,i^\prime}$ for every distinct $i, i^\prime$,
 and $x_{2,j}\ne x_{2,j^\prime}$ for every distinct $j,j^\prime$. 
We say that two rectangles 
$\RR=((x_{1,1},\ldots,x_{1,k}),(x_{2,1},\ldots,x_{2,\ell}))$ and 
$\RR^\prime=((x^{\prime}_{1,1},\ldots,x^{\prime}_{1,k}),(x^{\prime}_{2,1},\ldots,x^{\prime}_{2,\ell}))$
are $\XX_1$-disjoint (resp.~$\XX_2$-disjoint)
if $x_{1,i}\ne x^{\prime}_{1,i}$ for every $i\in[k]$ (resp.~if $x_{2,j}\ne x^{\prime}_{2,j}$ for every $j\in[\ell]$). In particular, we say that $\RR$ and $\RR^\prime$ are disjoint if they are either $\XX_1$-disjoint or $\XX_2$-disjoint.

For a rectangle  
$\RR=((x_{1,1},\ldots,x_{1,k}),(x_{2,1},\ldots,x_{2,\ell}))$, 
let $F_n[\RR]$ be a matrix whose $(i,j)$-entry is $F_n(x_{1,i},x_{2,j})$, and 
let $\mu(\RR) = \sum_{i\in[k],j\in[\ell]} \mu(x_{1,i},x_{2,j})$.  
We say that $\RR$ is similar to $\RR^\prime$ if $F_n[\RR]=F_n[\RR^\prime]$.
We define
\[
 \alpha(\mu):= \max_{(\RR_1,\RR_2)}\min\{\mu(\RR_1),\mu(\RR_2)\},
\]
where the maximum ranges over all pairs of similar disjoint rectangles $(\RR_1, \RR_2)$.
In addition,
\[
 \beta(\mu):= \min_y \CProb{(X_1,X_2),(X_1',X_2')\sim\mu}{(X_1,X_2)\ne(X_1',X_2')}{F_n(X_1,X_2)=F_n(X_1',X_2')=y},
\]
where $(X_1,X_2)$ and $(X_1',X_2')$ are independent.

We can demonstrate the communication lower bound in Theorem~\ref{thm:lowerbound}
from the following main technical lemma combined with an appropriate function $F_n$, 
which is provided in the study by Applebaum et al.~\cite{AHMS20JCRYPT}.

\begin{lemma}\label{lem:LBmain}
For every non-degenerate function $F_n:{\cal X}_1\times {\cal X}_2\rightarrow\bit$, 
we have 
\[
Q^{psm}_0(F_n)\ge 
 \max_\mu\left(\log(\alpha(\mu)^{-1}) + H_\infty(\mu) - \log(\beta(\mu)^{-1})\right)-1,
\]
where $\mu$ is taken over all distributions over $\XX_1\times\XX_2$ 
under which $F_n$ is non-degenerate,
and $H_\infty(\mu)$ is the min-entropy of $\mu$.
\end{lemma}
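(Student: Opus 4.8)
The plan is to replace the classical collision probability by the \emph{purity} of the averaged message states and to reduce the whole lemma to a single inequality of the shape ``distinct-input weighted collision $\le \alpha(\mu)\cdot 2^{-H_\infty(\mu)}$''. Fix a distribution $\mu$ under which $F_n$ is non-degenerate, write $X=(X_1,X_2)\sim\mu$, and set $d=\dim(\MM_n)$, so that $Q^{psm}_0(F_n)=\log d$. For an exact protocol the privacy condition forces the $r$-averaged message $\sigma_{x_1,x_2}:=\mathbb{E}_r[\rho_1(x_1;r)\otimes\rho_2(x_2;r)]$ to depend only on $y=F_n(x_1,x_2)$; call it $\sigma_y$. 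Since the referee computes $F_n$ with zero error, $\sigma_0$ and $\sigma_1$ have orthogonal supports, whence $\tr(\sigma_0\sigma_1)=0$. I would then study the averaged collision $\Phi:=\mathbb{E}_{X,X'\sim\mu}\tr(\sigma_{F_n(X)}\sigma_{F_n(X')})=\sum_y p_y^2\,\tr\sigma_y^2$, where $p_y=\Pr_{X\sim\mu}[F_n(X)=y]$, and its distinct-input restriction $\Phi_{\ne}:=\sum_{X\ne X'}\mu(X)\mu(X')\tr(\sigma_{F_n(X)}\sigma_{F_n(X')})$.

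First I would record the lower estimates. Because each $\sigma_y$ lives in a $d$-dimensional space, $\tr\sigma_y^2\ge 1/d$, so $\Phi\ge(p_0^2+p_1^2)/d\ge 1/(2d)$. Writing the distinct-input mass of the fibre $y$ as $p_y^2(1-c_y)$ with $c_y=\Pr[X=X'\mid F_n(X)=F_n(X')=y]$ and using $1-c_y\ge\beta(\mu)$ gives $\Phi_{\ne}\ge\beta(\mu)\,\Phi\ge\beta(\mu)/(2d)$, i.e.\ $d\ge\beta(\mu)/(2\Phi_{\ne})$. Hence it suffices to prove the upper estimate $\Phi_{\ne}\le\alpha(\mu)\cdot 2^{-H_\infty(\mu)}$ (up to the additive constant the proof keeps track of), since then $d\ge\beta(\mu)\,2^{H_\infty(\mu)}/(2\alpha(\mu))$, which is exactly the claimed bound $\log(\alpha(\mu)^{-1})+H_\infty(\mu)-\log(\beta(\mu)^{-1})-1$.

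The heart of the argument is this upper bound on $\Phi_{\ne}$, where the product structure and the rectangle combinatorics enter. Expanding the tensor messages gives $\tr(\sigma_{F_n(X)}\sigma_{F_n(X')})=\mathbb{E}_{r,r'}[A_{r,r'}(x_1,x_1')\,B_{r,r'}(x_2,x_2')]$ with the nonnegative Hilbert--Schmidt overlaps $A_{r,r'}(x_1,x_1')=\tr(\rho_1(x_1;r)\rho_1(x_1';r'))$ and $B_{r,r'}(x_2,x_2')=\tr(\rho_2(x_2;r)\rho_2(x_2';r'))$. Two structural facts drive the estimate. \emph{(i) Quantum unique collision}: for $r=r'$, exact correctness together with non-degeneracy forces $A_{r,r}(x_1,x_1')=0$ and $B_{r,r}(x_2,x_2')=0$ for distinct inputs in the support, the analogue of the injectivity of $x\mapsto m(x;r)$ in~\cite{AHMS20JCRYPT}. \emph{(ii) Weighted collisions} (Lemma~\ref{lem:weight}): since $\sum_{x_1'}\mu_1(x_1')\rho_1(x_1';r')$ is again a density operator, row sums such as $\sum_{x_1'}\mu_1(x_1')A_{r,r'}(x_1,x_1')$ are at most $1$, which replaces the ``each message collides with at most one message'' property in the quantum setting.

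With these in hand I would carry out the combinatorial charging of Claim~\ref{clm:purity upper bound}. Since the surviving contributions pair inputs of equal $F_n$-value, for fixed $r,r'$ the distinct-input mass is supported, in the weighted sense, on pairs of rectangles $(\RR_1,\RR_2)$ that are \emph{similar} ($F_n[\RR_1]=F_n[\RR_2]$) and that split into an $\XX_1$-disjoint and an $\XX_2$-disjoint part; bounding one cell weight by $2^{-H_\infty(\mu)}$ and summing the other over the rectangle gives $2^{-H_\infty(\mu)}\min\{\mu(\RR_1),\mu(\RR_2)\}\le \alpha(\mu)\,2^{-H_\infty(\mu)}$ per part. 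Averaging over $r,r'$ and absorbing the split into the additive constant yields the desired bound on $\Phi_{\ne}$. The main obstacle is precisely the passage from hard to weighted collisions: for $r\ne r'$ the overlaps $A_{r,r'},B_{r,r'}$ are arbitrary numbers in $[0,1]$ with no zero pattern, so the classical picture of an exact colliding rectangle dissolves, and one must show---via operator Cauchy--Schwarz and the density-operator bound (ii)---that the mass of these ``fuzzy'' rectangles still concentrates on similar disjoint rectangles. This is the technical core carried by Lemma~\ref{lem:weight} and Claim~\ref{clm:purity upper bound}.
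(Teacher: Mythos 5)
Your architecture matches the paper's: purity of the input-averaged message in place of collision probability, privacy plus the orthogonality $\tr(\sigma_0\sigma_1)=0$ to pay $\beta(\mu)^{-1}$ for restricting to distinct inputs (your $\Phi_{\ne}\ge\beta(\mu)\Phi$ is Claim~\ref{clm:purity upper bound} read backwards), and a per-$(r,r')$ rectangle charge of $\alpha(\mu)2^{-H_\infty(\mu)}$ for the distinct-input overlap mass. Your fact (i) is also correct, and is exactly what the paper establishes inside the proof of Lemma~\ref{lem:weight}. The genuine gap is your ingredient (ii): the density-operator argument you give proves only the $\mu_1$-weighted row-sum bound $\sum_{x_1'}\mu_1(x_1')A_{r,r'}(x_1,x_1')=\tr\bigl(\rho_1(x_1;r)\,\sum_{x_1'}\mu_1(x_1')\rho_1(x_1';r')\bigr)\le 1$, and this is too weak to run the charging. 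In the charging you must extract one factor of $\mu$ per cell of the rectangle (that is where $2^{-H_\infty(\mu)}$ and $\mu(\RR)\le\alpha(\mu)$ come from), after which the leftover quantity $\sum_{x_1'\in C(x_1)}A_{r,r'}(x_1,x_1')\cdot\sum_{x_2'\in C(x_2)}B_{r,r'}(x_2,x_2')$ carries no $\mu$-weights and must be bounded by $1$ \emph{unweighted}; the weighted bound does not control it (for $r\ne r'$ one state $\ket{\psi_1(x_1;r)}$ may have small positive overlap with many $\ket{\psi_1(x_1';r')}$, and without more information the unweighted sum is only bounded by the dimension). Following your route with (ii) as stated, the best you obtain is $\Phi_{\ne}\le 1$. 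The statement actually needed --- the paper's Lemma~\ref{lem:weight}, $\sum_{z\ne x_1}\lvert\braket{\psi_1(x_1;r)}{\psi_1(z;r')}\rvert^2\le 1$ --- is not an averaging fact at all: it is Bessel's inequality for the expansion of $\ket{\psi_1(x_1;r)}$ in the family $\{\ket{\psi_1(z;r')}\}_{z}$, whose pairwise orthogonality for a \emph{single} $r'$ is precisely your fact (i). So the raw material is present in your sketch, but the inference you need is (i) $\Rightarrow$ unweighted row sums $\le 1$ via Bessel, not the density-operator averaging you propose.

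Two smaller issues. Your constants do not close: you lose a factor $2$ in the lower estimate ($\Phi\ge(p_0^2+p_1^2)/d\ge 1/(2d)$, whereas $\Phi=\tr\rho^2\ge 1/d$ holds directly for $\rho=\sum_y p_y\sigma_y$) and plan to lose another factor $2$ when ``absorbing the split'' into the $\XX_1$-disjoint and $\XX_2$-disjoint parts; together that gives $-2$ rather than the stated $-1$, so you should keep only the second loss. And the charging itself is only gestured at: to make ``the mass concentrates on similar disjoint rectangles'' precise one needs the collision sets $C(x_1)$, the observation that $A_{r,r'}(x_1,x_1')>0$ and $B_{r,r'}(x_2,x_2')>0$ force $F_n(x_1,x_2)=F_n(x_1',x_2')$ by perfect correctness (two states assigned different answers with certainty must be orthogonal), and the fact that \emph{every} choice of representatives from the product of collision sets yields a rectangle similar and $\XX_1$-disjoint to the base rectangle, so that the supremum defining $\alpha(\mu)$ applies uniformly over those choices.
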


From the previous study \cite{AHMS20JCRYPT}, we can obtain the appropriate function 
by selecting a function at random, as illustrated in the following theorem.
\begin{theorem}[Applebaum et al.~\cite{AHMS20JCRYPT}]\label{thm:randomfunction}
For a $(1-o(1))$ fraction of the functions $F_n:\bit^n\times\bit^n\rightarrow\bit$, $F_n$ is non-degenerate and $\setsize{\RR}\le 2^n\cdot n^2$ holds for every pair $(\RR,\RR^\prime)$ of similar disjoint rectangles.
\end{theorem}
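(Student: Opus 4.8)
The plan is to establish both properties at once by the probabilistic method: draw $F_n$ uniformly at random, so that the $2^{2n}$ values $\{F_n(x_1,x_2)\}$ are independent fair coins, and show that each of the two bad events has probability $o(1)$; a union bound then exhibits a $(1-o(1))$ fraction of functions satisfying the theorem.

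For non-degeneracy, fix two distinct rows $x_1,x_1'\in\bit^n$. The event that $F_n(x_1,x_2)=F_n(x_1',x_2)$ for all $2^n$ choices of $x_2$ is a conjunction of $2^n$ independent fair coins and so has probability $2^{-2^n}$. Union-bounding over the at most $2^{2n}$ row pairs and, symmetrically, over the column pairs, the probability that $F_n$ is degenerate is at most $2\cdot 2^{2n}\cdot 2^{-2^n}=o(1)$.

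For the rectangle bound I would argue by a first-moment computation over all candidate pairs. Write a rectangle as $k\times\ell$, so that $\setsize{\RR}=k\ell$, and consider a similar disjoint pair $(\RR,\RR')$; since similarity forces equal dimensions and, by the symmetry between $\XX_1$- and $\XX_2$-disjointness, it suffices to treat the $\XX_1$-disjoint case. The first observation is that the claimed bound is automatic outside a narrow regime: if $\min(k,\ell)$ is below a suitable constant multiple of $n$, then $k\ell\le O(n)\cdot 2^n<2^n n^2$ already (using $k,\ell\le 2^n$), so only pairs with $\min(k,\ell)=\Omega(n)$ require the randomness. In that regime $\min(k,\ell)\ge 2cn$ gives $\tfrac{k\ell}{k+\ell}\ge cn$, i.e.\ $k\ell\ge cn(k+\ell)$, which is the slack driving the bound. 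Similarity means $F_n(x_{1,i},x_{2,j})=F_n(x'_{1,i},x'_{2,j})$ for all $i\in[k],j\in[\ell]$, a system of $k\ell$ equalities, while the number of ways to choose the two rectangles is at most $\binom{2^n}{k}^2\binom{2^n}{\ell}^2\le 2^{2n(k+\ell)}$. If these $k\ell$ equalities behaved as independent fair coins, the probability of similarity would be $2^{-k\ell}$, and $\sum_{k,\ell}2^{2n(k+\ell)-k\ell}=o(1)$ would follow, since $k\ell\ge cn(k+\ell)$ with $c>2$ beats the $2n(k+\ell)$ coming from the counting.

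The main obstacle is precisely the independence of these $k\ell$ constraints. Disjointness guarantees that each equality joins two genuinely distinct cells (since $x_{1,i}\ne x'_{1,i}$ rules out self-loops), but a cell $(x_{1,i},x_{2,j})$ of $\RR$ may still coincide with a cell $(x'_{1,i'},x'_{2,j'})$ of $\RR'$ for $i\ne i'$, so the equalities form a graph $G$ on the involved cells whose cycles create redundant constraints. I would therefore replace the naive $2^{-k\ell}$ by $2^{-(|V(G)|-c(G))}$, the number of independent constraints being the vertices minus the connected components, equivalently $k\ell$ minus the cycle rank of $G$. The crux is to show that for disjoint rectangles this cycle rank is $o(k\ell)$, so that the similarity probability is still $2^{-\Omega(k\ell)}$ and the first-moment sum converges; the polynomial factor $n^2$ rather than $n$ in the statement is the room absorbed by this cycle-rank loss together with the constants in the $\min(k,\ell)=\Omega(n)$ reduction. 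Ruling out that the constraint graph collapses into a few large, highly cyclic components is where the real combinatorial work lies.
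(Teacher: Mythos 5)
First, note that the paper does not prove this statement itself: it is imported verbatim from Applebaum et al.~\cite{AHMS20JCRYPT}, so your proposal has to be judged on its own merits. Your non-degeneracy argument is correct and complete, and your first-moment setup for the rectangle bound (reduce to the regime $\min(k,\ell)=\Omega(n)$, count candidate pairs by $2^{2n(k+\ell)}$, and beat this with the probability of similarity) is the right strategy. However, the step you yourself flag as ``where the real combinatorial work lies'' --- lower-bounding the number of independent fair-coin constraints among the $k\ell$ equalities $F_n(x_{1,i},x_{2,j})=F_n(x'_{1,i},x'_{2,j})$ --- is precisely the crux of the lemma, and leaving it open means the proof is not done: without a bound of the form ``at least $\Omega(k\ell)$ independent constraints,'' the first-moment sum is not controlled, since a priori the cycle rank of your constraint graph could eat almost all of the exponent.

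The gap is real but closable by an observation you missed. In the constraint (multi)graph $G$ --- vertices the distinct cells occurring in $\RR\cup\RR'$, edges the $k\ell$ equalities --- every vertex has degree at most $2$: a cell occurs in at most one position $(i,j)$ of $\RR$ (rows and columns inside a rectangle are pairwise distinct), hence is the left endpoint of at most one constraint, and likewise the right endpoint of at most one; and $\XX_1$-disjointness ($x_{1,i}\neq x'_{1,i}$ for all $i$) rules out self-loops. A loopless graph of maximum degree $2$ is a disjoint union of paths and cycles, and a component with $m$ edges contributes at least $m/2$ to $\setsize{V(G)}-c(G)$; hence the probability of similarity is at most $2^{-k\ell/2}$. (Equivalently, one can first pass to a subset $I\subseteq[k]$ with $\setsize{I}\geq k/3$ on which $\{x_{1,i}\}_{i\in I}$ and $\{x'_{1,i}\}_{i\in I}$ are disjoint as sets --- the auxiliary graph on $[k]$ recording coincidences $x_{1,i}=x'_{1,i'}$ or $x'_{1,i}=x_{1,i'}$ has maximum degree $2$ and no loops, hence an independent set of density $1/3$ --- whereupon the $\setsize{I}\cdot\ell$ surviving constraints form a perfect matching on distinct cells and are genuinely independent.) With the exponent $k\ell/2$ (or $k\ell/3$) in hand, your own computation goes through in the regime $\min(k,\ell)\geq Cn$ for a suitable constant $C$, and the complementary regime gives $k\ell\leq Cn\cdot 2^n\leq 2^n n^2$ unconditionally, which is exactly where the $n^2$ slack in the statement is spent.
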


Considering the uniform distribution $U$ over $\{0,1\}^n\times\{0,1\}^n$, 
the communication lower bound from Lemma~\ref{lem:LBmain} of PSQM protocols for $F_n:\bit^n\times\bit^n\rightarrow\bit$ is bounded by $\log(\alpha(U)^{-1}) + H_\infty(U) - \log(\beta(U)^{-1})-1$. By Theorem~\ref{thm:randomfunction}, we can easily see that this bound is $3n-2\log{n}-O(1)$ for a $(1-o(1))$ fraction of the functions $\bit^n\times\bit^n\rightarrow\bit$, as given in Theorem~\ref{thm:lowerbound}.

Now, we provide the proof of the main technical lemma.
\begin{proof}[Proof of Lemma~\ref{lem:LBmain}]
From the correctness, the referee $R$ outputs $F_n(x_1,x_2)$ for the received quantum message $\rho(x_1,x_2;r)$ 
for every $x_1,x_2$ and every $r$. 
Without loss of generality, we can assume that 
$P_1$ and $P_2$ generate pure states $\ket{\psi_1(x_1;r)}$ and $\ket{\psi_2(x_2;r)}$ with a shared randomness $r$, respectively. 

This assumption is justified as follows.
Suppose that $P_1$ and $P_2$ generate mixed states $\rho_1(x_1; r)$
and $\rho_2(x_2; r)$ as their messages on inputs $x_1,x_2$ and a shared random string $r$.
By the spectral decomposition, we have 
$
 \rho_1(x_1;r) = \sum_{i} p_i \ketbra{\psi_1^{(i)}(x_1;r)}{\psi_1^{(i)}(x_1;r)}
$ and 
$
 \rho_2(x_2;r) = \sum_{j} q_j \ketbra{\psi_2^{(j)}(x_2;r)}{\psi_2^{(j)}(x_2;r)}.
$
The joint message state is
\[
 \rho(x_1,x_2;r) := \sum_{i,j} p_i q_j \ketbra{\psi_1^{(i)}(x_1;r), \psi_2^{(j)}(x_2;r)}{\psi_1^{(i)}(x_1;r), \psi_2^{(j)}(x_2;r)}
\]
and its probabilistic mixture over the shared random string $r$ is 
\[
 \rho(x_1,x_2) := \sum_{r} \pi(r) \rho(x_1,x_2;r)
  = \sum_{i,j,r} p_i q_j \pi(r) \ketbra{\psi_1^{(i)}(x_1;r), \psi_2^{(j)}(x_2;r)}{\psi_1^{(i)}(x_1;r), \psi_2^{(j)}(x_2;r)}.
\]
Rephrasing the probability distribution $\set{\pi(r) p_i q_j}_{i,j,r}$ 
and the pure message states $\ket{\psi_1^{(i)}(x_1;r)}$, $\ket{\psi_2^{(j)}(x_2;r)}$
to $\set{\pi(r)}_{r}$ and $\ket{\psi_1(x_1;r)}, \ket{\psi_2(x_2;r)}$
 respectively,
we can assume that they generate pure states as their messages from the beginning.

We denote by $\ket{\psi(x_1,x_2;r)}$ their joint state $\ket{\psi_1(x_1;r)}\otimes\ket{\psi_2(x_2;r)}$.
Namely, we set 
\begin{equation}\label{eq:rho(x_1,x_2;r)}
    \rho(x_1,x_2;r):=\ketbra{\psi(x_1,x_2;r)}{\psi(x_1,x_2;r)}
    =\ketbra{\psi_1(x_1;r)}{\psi_1(x_1;r)}\otimes\ketbra{\psi_2(x_2;r)}{\psi_2(x_2;r)}.
\end{equation}
In addition, we set 
\begin{equation}\label{eq:rho(x_1,x_2)}
    \rho(x_1,x_2):= \sum_r\pi(r)\rho(x_1,x_2;r),
\end{equation} 
where $\pi(r)$ denotes the probability that $r$ is selected as a shared random string under the uniform distribution.

\ignore{
As performed by Applebaum et al.~\cite{AHMS20JCRYPT}, we consider the collision probability between random messages by two independent executions of $\Pi$.
We consider a collision between the quantum messages 
$\ket{\psi_1(x_1;r)}\otimes\ket{\psi_2(x_2;r)}$ in the first execution and the quantum messages 
$\ket{\psi_1(x^\prime_1;r^\prime)}\otimes\ket{\psi_2(x^\prime_2;r^\prime)}$ in the second execution, where 
$(x_1,x_2)$ and $(x^\prime_1,x^\prime_2)$ are 
independent random inputs under a distribution $\mu$,
and $r, r^\prime$ are independent shared random strings chosen uniformly at random.

However, the collision probability $\Pr[\ket{\psi(x_1,x_2;r)}=\ket{\psi(x^\prime_1,x^\prime_2;r^\prime}]$ of the quantum messages is not lower-bounded by the inverse of dimension of the message space as in the classical setting in \cite{AHMS20JCRYPT}.
Instead of the collision between two independent messages, we exploit the notion of the {\em purity}\/ of a quantum state $\rho$,
which is defined as $\tr\rho^2$. For this measure, it is known that the purity is lower-bounded by the inverse 
of the dimension of $\rho$.
}
As mentioned above, we use the purity as a collision measure of quantum messages in order to 
obtain lower bounds of quantum message length from upper bounds of the purity.
We set $\rho:=\sum_{x_1,x_2}\mu(x_1,x_2)\rho(x_1,x_2)$. We then have 
\begin{align}
 \frac{1}{\dim(\MM)} 
 &\le \tr\rho^2 
 = \tr\left(\sum_{x_1,x_2}\mu(x_1,x_2)\rho(x_1,x_2)\right)^2 \nonumber\\
 &=
 \tr\sum_{x_1,x_2,x^\prime_1,x^\prime_2}\mu(x_1,x_2)\rho(x_1,x_2)\mu(x^\prime_1,x^\prime_2)\rho(x^\prime_1,x^\prime_2).\label{eq:1/dim(M)}
\end{align}

Then, the purity of $\rho$ is upper-bounded as follows.
\begin{claim}\label{clm:purity upper bound}
\begin{align*}
 \tr\rho^2 \le \beta(\mu)^{-1}\tr\sum_{(x_1,x_2)\ne(x^\prime_1,x^\prime_2)}\mu(x_1,x_2)\rho(x_1,x_2)\mu(x^\prime_1,x^\prime_2)\rho(x^\prime_1,x^\prime_2).
\end{align*}
\end{claim}
The proof of this claim is done by a combinatorial analysis 
of the trace as a quantum counterpart of the analysis 
of the collision probability in \cite{AHMS20JCRYPT}.
The detailed proof will be given in Appendix~\ref{sec:appendix}.

Next, we consider an upper bound of 
\begin{equation}\label{eq:left-hand_of_Claim3.1}
    \tr\sum_{(x_1,x_2)\ne(x^\prime_1,x^\prime_2)}\mu(x_1,x_2)\rho(x_1,x_2)\mu(x^\prime_1,x^\prime_2)\rho(x^\prime_1,x^\prime_2).
\end{equation}
Actually, we show that for every $r$ and $r^\prime$, 
\[
\tr\sum_{(x_1,x_2)\ne(x^\prime_1,x^\prime_2)}\mu(x_1,x_2)\rho(x_1,x_2;r)\mu(x^\prime_1,x^\prime_2)\rho(x^\prime_1,x^\prime_2;r^\prime)
\]
is at most $2\alpha(\mu)2^{-H_\infty(\mu)}$ 
as this implies that Eq.~(\ref{eq:left-hand_of_Claim3.1}) is also at most $2\alpha(\mu)2^{-H_\infty(\mu)}$ (by Eq.~(\ref{eq:rho(x_1,x_2)})). 
This completes the proof of Lemma~\ref{lem:LBmain} by Claim~\ref{clm:purity upper bound} and Eq.~(\ref{eq:1/dim(M)}).

Now we fix any $r$ and $r'$. 
From Eq.~(\ref{eq:rho(x_1,x_2;r)}) and the union bound, 
we have 
\begin{align}
\lefteqn{ \tr\sum_{(x_1,x_2)\ne(x^\prime_1,x^\prime_2)}\mu(x_1,x_2)\rho(x_1,x_2;r)\mu(x^\prime_1,x^\prime_2)\rho(x^\prime_1,x^\prime_2;r^\prime)}\nonumber\\
 & = \sum_{(x_1,x_2)\ne(x^\prime_1,x^\prime_2)}\mu(x_1,x_2)\mu(x^\prime_1,x^\prime_2)\lvert \braket{\psi_1(x_1;r)}{\psi_1(x^\prime_1;r^\prime)}\rvert^2 \cdot 
 \lvert \braket{\psi_2(x_2;r)}{\psi_2(x^\prime_2;r^\prime)}\rvert^2.\nonumber\\
 & \le
 \sum_{x_1\ne x^\prime_1,x_2,x^\prime_2}\mu(x_1,x_2)\mu(x^\prime_1,x^\prime_2)\lvert \braket{\psi_1(x_1;r)}{\psi_1(x^\prime_1;r^\prime)}\rvert^2 \cdot 
 \lvert \braket{\psi_2(x_2;r)}{\psi_2(x^\prime_2;r^\prime)}\rvert^2.\nonumber\\
 &\ \ \ \ + \sum_{x_1,x^\prime_1,x_2\ne x^\prime_2}\mu(x_1,x_2)\mu(x^\prime_1,x^\prime_2)
 \lvert \braket{\psi_1(x_1;r)}{\psi_1(x^\prime_1;r^\prime)}\rvert^2 \cdot 
 \lvert \braket{\psi_2(x_2;r)}{\psi_2(x^\prime_2;r^\prime)}\rvert^2.\label{eq:first-term-bounded}
\end{align}
It suffices to show that the first term of the right-hand of  Eq.~(\ref{eq:first-term-bounded}) is at most $\alpha(\mu)2^{-H_\infty(\mu)}$ from the symmetry of $P_1$ and $P_2$.

Note that we can regard the referee as a POVM $R=\set{R_y}_{y\in\bit}$.
The following claim demonstrates that the referee is projective in the two-party PSQM setting. 
(Its proof will be given in Appendix~\ref{sec:appendix}.)
\begin{claim}\label{clm:referee is PVM}
The referee $R=\set{R_y}_{y\in\bit}$ is a PVM. 
\end{claim}

In the classical case of \cite{AHMS20JCRYPT}, Applebaum et al.~used the fact that 
for every $x_1$ and every $r, r^\prime$ there exists at most one $z$ such that 
$\ket{\psi_1(x_1;r)}=\ket{\psi_1(z;r^\prime)}$ 
if $\ket{\psi_1(x_1;r)}, \ket{\psi_1(z;r^\prime)}$ are classical; that is, either
$\braket{\psi_1(x_1;r)}{\psi_1(z;r^\prime)}=0$ or $\braket{\psi_1(x_1;r)}{\psi_1(z;r^\prime)}=1$,
which can be derived from the non-degeneracy of $F_n$. 
However, we cannot demonstrate the same fact for quantum messages.
Instead, we can prove the following relaxed version of the fact for quantum messages.
\begin{lemma}\label{lem:weight}
If $F_n$ is non-degenerate, 
we have 
\[
 \sum_{z\ne x_1} \lvert\braket{\psi_1(x_1;r)}{\psi_1(z;r^\prime)}\rvert^2 \le 1
\]
for every $r,r^\prime$ and every $x_1$.
Similarly
\[
 \sum_{z} \lvert\braket{\psi_2(x_2;r)}{\psi_2(z;r^\prime)}\rvert^2 \le 1
\]
for every $r,r^\prime$ and every $x_2$.
\end{lemma}
The proof of this lemma will be given in Appendix~\ref{sec:appendix}.

Let $w_1(x_1,x^\prime_1):=\lvert\braket{\psi_1(x_1;r)}{\psi_1(x^\prime_1;r^\prime)}\rvert^2$
and let $w_2(x_2,x^\prime_2):=\lvert\braket{\psi_2(x_2;r)}{\psi_2(x^\prime_2;r^\prime)}\rvert^2$.
We say that $x_1$ collides with $x^\prime_1$ if 
$w_1(x_1,x^\prime_1)>0$. Similarly, we say that $x_2$ collides with $x^\prime_2$ if 
$w_2(x_2,x^\prime_2)>0$.

Now, our final goal is to upper-bound 
\begin{equation}\label{eq:goal}
 \sum_{x_1\ne x^\prime_1,x_2,x^\prime_2}\mu(x_1,x_2)\mu(x^\prime_1,x^\prime_2) 
 w_1(x_1,x^\prime_1)w_2(x_2,x^\prime_2).    
\end{equation}

Let $C(x_1)$ 
be the set of the elements in $\XX_1$ with which $x_1$ collides except for $x_1$ itself.
Similarly, let $C(x_2)$ 
be the set of the elements in $\XX_2$ with which $x_2$ collides
(note that $C(x_2)$ may contain $x_2$).

Let $\vect{x_1}:=(x_1: C(x_1)\ne\emptyset)$ with an arbitrary (e.g., lexicographical) order in $\XX_1$. We denote $\vect{x_1} = (u_1, u_2, \ldots, u_k)$.
Then, we select any element 
$\vect{x_1}^\prime=(u^\prime_1,u^\prime_2,\ldots, u^\prime_k)$ from $C(u_1)\times\cdots\times C(u_k)$.
Note that $u_i$ collides with $u^\prime_i$ and $u_i\ne u^\prime_i$ for every $i$.
Similarly, let  $\vect{x_2}:=(x_2: C(x_2)\ne\emptyset)=(v_1, v_2, \ldots, v_\ell)$ 
with an arbitrary order in $\XX_2$, and 
we then select any element $\vect{x_2}^\prime=(v^\prime_1, v^\prime_2, \ldots, v^\prime_\ell)$ from $C(v_1)\times\cdots\times C(v_\ell)$.

Then, we can show that for every choice 
of $\vect{x_1}^\prime$ and $\vect{x_2}^\prime$, we have
\[
 \sum_{i,j} \mu(u_i,v_j)\mu(u^\prime_i,v^\prime_j) \le \alpha(\mu)2^{-H_\infty(\mu)}.
\]

The reason is as follows. We consider two rectangles  
$\RR:=(\vect{x_1},\vect{x_2})$ and $\RR^\prime:=(\vect{x^\prime_1},\vect{x^\prime_2})$.
We observe that
$R(\ket{\psi_1(x_1;r)}\ket{\psi_2(x_2;r)})=R(\ket{\psi_1(x^\prime_1;r^\prime)}\ket{\psi_2(x^\prime_2;r^\prime)})$ 
(where $R(\ket{\varphi})$ denotes the classical value that 
$R$ outputs on input $\ket{\varphi}$)
if $x_1$ collides with $x^\prime_1$ and $x_2$ collides with $x^\prime_2$
from the perfect correctness.
Therefore, $\XX_1$-disjoint rectangles 
$\RR$ and $\RR^\prime$
are similar; that is, $F_n[\RR]=F_n[\RR^\prime]$. 
Without loss of generality, we can assume that $\mu(\RR)\le\mu(\RR^\prime)$.
Hence, we have $\mu(\RR)\le\alpha(\mu)$.
Thus, we can see that for random variables $X_1,X_2,X^\prime_1,X^\prime_2$
\begin{align*}
\sum_{i,j} \mu(u_i,v_j)\mu(u^\prime_i,v^\prime_j)
& = \sum_{i,j} \Prob{}{(X_1,X_2)=(u_i,v_j) \wedge (X^\prime_1,X^\prime_2)=(u^\prime_i,v^\prime_j)}\\
& \le \max_{(x_1,x_2)}\mu(x_1,x_2) \sum_{i,j} \Prob{}{(X_1,X_2)=(u_i,v_j)}\\
& \le 2^{-H_\infty(\mu)} \alpha(\mu).
\end{align*}

Furthermore, it holds for every $i,j$ that
\begin{align*}
 \sum_{u^\prime_i\in C(u_i), v^\prime_j\in C(v_j)} w_1(u_i,u^\prime_i)w_2(v_j,v^\prime_j) 
 &=\sum_{u^\prime_i\in C(u_i)} w_1(u_i,u^\prime_i)\left(\sum_{v^\prime_j\in C(v_j)}w_2(v_j,v^\prime_j)\right)\\
 &\le \sum_{u^\prime_i\in C(u_i)} w_1(u_i,u^\prime_i)
 \le 1
\end{align*}
from Lemma~\ref{lem:weight}. Thus, we have
\begin{align*}
 \lefteqn{\sum_{x_1\ne x^\prime_1,x_2,x^\prime_2}\mu(x_1,x_2)\mu(x^\prime_1,x^\prime_2) 
 w_1(x_1,x^\prime_1)w_2(x_2,x^\prime_2)} \\
 & = \sum_{i,j}\sum_{u^\prime_i\in C(u_i), v^\prime_j\in C(v_j)}\mu(u_i,v_j)\mu(u^\prime_i,v^\prime_j) 
 w_1(u_i,u^\prime_i)w_2(v_j,v^\prime_j)\\
 & \le \sum_{i,j} \mu(u_i,v_j)\mu(\hat{u}_i,\hat{v}_j)  \sum_{u^\prime_i\in C(u_i), v^\prime_j\in C(v_j)} 
 w_1(u_i,u^\prime_i)w_2(v_j,v^\prime_j)\\
 & \le \sum_{i,j} \mu(u_i,v_j)\mu(\hat{u}_i,\hat{v}_j)\\
 & \le \alpha(\mu)2^{-H_\infty(\mu)},
\end{align*}
where $\mu(\hat{u}_i,\hat{v}_j) = \max_{u^\prime_i\in C(u_i), v^\prime_j\in C(v_j)}\mu(u^\prime_i,v^\prime_j)$.
Eventually, an upper bound of Eq.~(\ref{eq:goal}) is  $\alpha(\mu)2^{-H_\infty(\mu)}$.
\end{proof}

\section{Power of Shared Entanglement for Total Functions}

In this section, we prove Theorem~\ref{thm:upperbound}, 
which implies a factor two gap between PSQMs with shared entanglement and without shared entanglement for a total function. The main part of Theorem~\ref{thm:upperbound} provides a $k$-party PSQM protocol for a total function $GEQ_{2l}: (\{0,1\}^{2l})^k\rightarrow\{0,1\}$ 
defined by 
$$
GEQ_{2l}(x_1,x_2,\ldots,x_k)=
\left\{
\begin{array}{ll}
1 & (\sum_{j=1}^k x_j^1=\sum_{j=1}^k x_j^2
=\cdots=\sum_{j=1}^k x_j^{2l-1}=\sum_{j=1}^k x_j^{2l}=0),\\
0 & (\mbox{otherwise}),
\end{array}
\right.
$$ 
where each $x_j=x_j^1x_j^2\cdots x_j^{2l-1}x_j^{2l}$ is an element of $\{0,1\}^{2l}$, and the summation is taken over $\mathbb{F}_2$.
To reduce the communication complexity from the trivial $2kl$ qubits 
to $kl$ qubits, we encode half of the input bits 
by bit flipping of the shared state $\frac{1}{\sqrt{2}}(|0^k\rangle+|1^k\rangle)$ (called the $k$-qubit GHZ state) among the $k$ parties,
and the other half by phase flipping of the state, by a method similar to superdense coding (e.g., see~\cite{NC00}). More precisely, 
we exploit an encoding similar to two-party quantum SMP protocols with shared entangled states to compute the equality function~\cite{HSW+05}. 
However, this is not sufficient for PSQM protocols.
To convert quantum SMP protocols into PSQM protocols,
we further use shared randomness among the $k$ parties,
and hide the input strings from the referee except for the output of the function $GEQ_{2l}$. This hiding can be shown to be possible 
by multiplying a random element in $\mathbb{F}_{2^{2l}}^*$ 
by the element in $\mathbb{F}_{2^{2l}}^*$ that corresponds to the input $x_j$. 

For the proof of Theorem~\ref{thm:upperbound},
we first consider a PSQM protocol for a finite function. 
Let $Sum_2:(\{0,1\}^2)^k\rightarrow\{0,1\}^2$ be 
$$
Sum_2(x_1,x_2,\ldots,x_k)=\left(\sum_{j=1}^k x_j^1, \sum_{j=1}^k x_j^2\right),
$$ 
where each $x_j=x_j^1x_j^2$ is an element of $\{0,1\}^2$, 
and the summation is taken over $\mathbb{F}_2$.

\begin{lemma}\label{lemma:M2}
For any even (resp. odd) $k$, $Q^{psm,*}_{0}(Sum_2)\leq k$ (resp. $\leq k+1$). 
\end{lemma}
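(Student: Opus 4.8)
The plan is to give an explicit exact protocol that sends one GHZ qubit per party. The $k$ parties pre-share the $k$-qubit GHZ state $\frac{1}{\sqrt{2}}(\ket{0^k}+\ket{1^k})$ together with a shared random string $\mathbf{r}=(r_1,\ldots,r_k)\in\bit^k$ of even Hamming weight (such correlated randomness is available in the shared-entanglement model, e.g.\ by measuring additional pre-shared states). Encoding both input bits of $P_j$ into its single qubit in superdense-coding style, I would have $P_j$ apply $X^{x_j^1\oplus r_j}Z^{x_j^2}$ to its qubit and send it to $R$, for a total of $k$ qubits. Writing $s=x_1^1\cdots x_k^1$, $a=s\oplus\mathbf{r}$, $p_1=\bigoplus_j x_j^1$ and $p_2=\bigoplus_j x_j^2$, a direct calculation shows that $R$ receives $\frac{1}{\sqrt{2}}(\ket{a}+(-1)^{p_2}\ket{\bar a})$, where $\bar a$ denotes the bitwise complement. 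Since $\mathbf r$ has even weight, $a$ has the same parity $p_1$ as $s$, and for even $k$ the complement $\bar a$ has parity $p_1$ as well.

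For correctness I would let $R$ first measure the parity observable $Z^{\otimes k}$. Both branches $\ket a$ and $\ket{\bar a}$ are eigenvectors with the common eigenvalue $(-1)^{p_1}$ (this is where evenness of $k$ is used), so the outcome is $p_1$ with certainty and the state is left intact. Then $R$ applies $H^{\otimes k}$ and measures in the computational basis; a short computation gives $H^{\otimes k}\frac{1}{\sqrt 2}(\ket a+(-1)^{p_2}\ket{\bar a})\propto\sum_{t:\,\bigoplus_i t_i=p_2}(-1)^{a\cdot t}\ket t$, so the outcome $t$ always satisfies $\bigoplus_i t_i=p_2$ and its parity yields $p_2$. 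Hence $R$ outputs $(p_1,p_2)=Sum_2(x_1,\ldots,x_k)$ with zero error.

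For privacy I would average the joint message state over $\mathbf r$. Because $\mathbf r$ is uniform over even-weight strings, $a=s\oplus\mathbf r$ is uniform over the parity-$p_1$ strings, and since $a$ and $\bar a$ give the same pure state, the averaged state is the uniform mixture of the projectors $\ketbra{\phi_a}{\phi_a}$ with $\ket{\phi_a}=\frac{1}{\sqrt{2}}(\ket a+(-1)^{p_2}\ket{\bar a})$ over the complementary pairs of parity-$p_1$ strings. This mixture depends only on $(p_1,p_2)$, so the simulator can reconstruct it from the output alone; this is precisely the step where the shared randomness, absent in the plain SMP encoding, erases the dependence on the individual inputs. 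For odd $k$ I would run the identical construction on $k+1$ qubits: take a $(k+1)$-qubit GHZ state, let $P_1$ hold one extra qubit acting as a virtual party with dummy input $00$, and draw $\mathbf r$ uniformly from the even-weight strings in $\bit^{k+1}$. The qubit count is then even, the analysis above applies verbatim, and the communication becomes $k+1$.

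The step I expect to be the main obstacle is reconciling correctness with privacy: $R$ must read off $p_1$, which is a global property of the hidden string $s$, without collapsing the superposition onto a particular $s$ (which would leak the inputs). The resolution is to measure the parity observable $Z^{\otimes k}$ rather than the computational basis; for even $k$ both branches share the eigenvalue $(-1)^{p_1}$, so this observable extracts $p_1$ non-destructively and still permits the subsequent extraction of $p_2$ via $H^{\otimes k}$. This even/odd dichotomy is intrinsic, since complementation flips Hamming-weight parity when $k$ is odd, which is exactly why the odd case requires one additional qubit.
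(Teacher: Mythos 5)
Your proposal is correct and is essentially the paper's own protocol: the same GHZ state, the same superdense-style encoding $X^{x_j^1\oplus r_j}Z^{x_j^2}$ with an even-weight shared random string masking the bit-flip part, the same privacy argument via the uniform mixture over the $2^{k-2}$ states consistent with $(p_1,p_2)$, and the same dummy-party trick for odd $k$. The only cosmetic difference is that the paper measures directly in the GHZ-type basis $\{\ket{\Phi(y_1,\ldots,y_{k-1},z)}\}$, whereas you factor that same measurement into a nondestructive $Z^{\otimes k}$ parity measurement followed by $H^{\otimes k}$ and a computational-basis measurement.
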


\begin{proof}
First, we consider the case in which $k$ is even. The quantum protocol is as follows.

\ 

\noindent
{\bf Protocol ${\cal P}_{Sum_2}$:}

0. All the parties share the entangled state
\[
\frac{1}{\sqrt{2}}\left(\bigotimes_{j=1}^k |0\rangle_{Q_j} + \bigotimes_{j=1}^k |1\rangle_{Q_j}\right), 
\]
where the single-qubit register $Q_j$ is owned by party $P_j$. 
Moreover, the parties share a $k$-bit string $r=r_1r_2\cdots r_k$ such that $\sum_{j=1}^k r_j=0$.

1. Each party $P_j$ applies $Z$ on $Q_j$ if $x_j^2=1$. The resulting state is
\[
\frac{1}{\sqrt{2}}\left(\bigotimes_{j=1}^k |0\rangle_{Q_j} + \bigotimes_{j=1}^k (-1)^{\sum_{j=1}^k x_j^2} |1\rangle_{Q_j}\right).
\]

2. Each party $P_j$ applies $X$ on $Q_j$ if $x_j^1\oplus r_j=1$. The resulting state is
\begin{equation}\label{eq:M2_step2}
\frac{1}{\sqrt{2}}\left(\bigotimes_{j=1}^k |x_j^1\oplus r_j \rangle_{Q_j} + (-1)^{\sum_{j=1}^k x_j^2} \bigotimes_{j=1}^k |x_j^1\oplus r_j\oplus 1\rangle_{Q_j}\right).
\end{equation}

3. Each party $P_j$ sends $Q_j$ to the referee $R$.

4. $R$ measures quantum registers $Q_1,Q_2,\ldots,Q_k$ in the basis 
$$
\left\{
|\Phi(y_1,y_2,\ldots,y_{k-1},z)\rangle:=\frac{1}{\sqrt{2}}\left(|y_1,y_2, \ldots, y_{k-1}, 0\rangle + (-1)^{z} |y_1\oplus 1, y_2\oplus 1, \ldots, y_{k-1}\oplus 1, 1\rangle \right)
\right\},
$$ 
and let $y_1y_2\cdots y_{k-1}z$ be the measurement result.

5. $R$ outputs the two bits $\sum_{j=1}^{k-1} y_j$ and $z$.

\

{\bf Correctness:} 
The second bit of the output of $R$ is $z=\sum_{j=1}^k x_j^2$, as desired.
For the first bit, we consider two cases: (i) $x_k^1\oplus r_k=0$ and (ii) $x_k^1\oplus r_k=1$. 
We first consider case (i). Then, $y_j=x_j^1\oplus r_j$ for $j=1,2,\ldots,k-1$, and we thus obtain the desired output
\[
\sum_{j=1}^{k-1} y_j = \sum_{j=1}^{k-1} x_j^1\oplus r_j = \sum_{j=1}^{k} x_j^1\oplus r_j = \sum_{j=1}^{k} x_j^1,
\]
where the second inequality originates from $x_k^1\oplus r_k=0$, and the last equality originates from $\sum_{j=1}^{k} r_k=0$. 
For case (ii), $y_j=x_j^1\oplus r_j\oplus 1$ for $j=1,2,\ldots,k-1$, and we thus obtain the desired output
\[
\sum_{j=1}^{k-1} y_j 
= \sum_{j=1}^{k-1} x_j^1\oplus r_j\oplus 1 = \left(\sum_{j=1}^{k-1} x_j^1\oplus r_j\right) \oplus 1
= \sum_{j=1}^{k} x_j^1\oplus r_j = \sum_{j=1}^{k} x_j^1,
\]
where the second equality originates from the fact that $k$ is even, the third equality originates from $x_k^1\oplus r_k=1$, and the last equality originates from $\sum_{j=1}^{k} r_k=0$.

{\bf Privacy:} Let the output of $Sum_2$ be $(b_1,b_2)$,  
where $b_1=\sum_{j=1}^k  x_j^1$ and $b_2=\sum_{j=1}^k x_j^2$. 
As $R$ has no knowledge about $r_1,\ldots,r_k$ except that the sum is $0$, we can observe that the quantum state that $R$ receives (represented by Eq.~(\ref{eq:M2_step2})) is taken from the set of $2^{k-2}$ orthogonal pure states 
$$
\left\{|\Phi(y_1,\ldots,y_{k-1},b_2)\rangle:~\sum_{j=1}^{k-1} y_j=b_1\right\}
$$ 
(up to the total phase) uniformly at random. Thus, the simulator can simulate the distribution of the message given the output of $Sum_2$. 

In the case in which $k$ is odd, the last party $P_k$ prepares an extra two-bit string $x_{k+1}=00$, and ${\cal P}_{Sum_2}$ is run for the $(k+1)$-party case,
where $P_k$ also plays the role of the party $P_{k+1}$. 
\end{proof}

Next, we present the main lemma for Theorem~\ref{thm:upperbound}. 

\begin{lemma}\label{thm:GEQ_2l}
For any even (resp. odd) $k$, 
$Q^{psm,*}_{0}(GEQ_{2l})\leq kl$ (resp.~$\leq (k+1)l$). 
\end{lemma}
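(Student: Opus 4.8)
The plan is to reduce the problem to $l$ parallel, independent executions of the $Sum_2$ protocol of Lemma~\ref{lemma:M2}, one for each of the $l$ consecutive two-bit blocks of the $2l$-bit inputs, and to enforce privacy by a preliminary field-multiplication step that uses shared randomness. Write each $x_j\in\{0,1\}^{2l}$ as the concatenation of $l$ blocks of two bits, and set $s:=\sum_{j=1}^k x_j\in\{0,1\}^{2l}$ with the summation over $\mathbb{F}_2$. By definition $GEQ_{2l}(x_1,\ldots,x_k)=1$ exactly when $s=0^{2l}$. Running $Sum_2$ on every block in parallel lets the referee recover the full sum vector $s$, from which $GEQ_{2l}$ is immediate; this already yields the claimed communication count, since $l$ copies of $Sum_2$ cost $kl$ qubits for even $k$ and $(k+1)l$ qubits for odd $k$. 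The difficulty is that this naive protocol leaks all of $s$, whereas privacy demands that the referee learn only whether $s=0^{2l}$.

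To fix this, first I would have the parties draw, as part of their shared randomness, a uniformly random element $c\in\mathbb{F}_{2^{2l}}^*$. Identifying $\{0,1\}^{2l}$ with $\mathbb{F}_{2^{2l}}$ via the representation ${\sf p}$ of Eq.~(\ref{eq:polynomial-rep}), multiplication by $c$ is an invertible $\mathbb{F}_2$-linear map $M_c$ on $\{0,1\}^{2l}$. Each party $P_j$ locally replaces its input by $x_j':=M_c(x_j)$, and the parties then run the parallel $Sum_2$ protocol on the two-bit blocks of the $x_j'$. By $\mathbb{F}_2$-linearity of $M_c$ we have $\sum_j x_j'=M_c(s)=:s'$, so the referee recovers $s'$; and since $M_c$ is a bijection, $s'=0^{2l}$ iff $s=0^{2l}$. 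The referee therefore outputs $1$ if $s'=0^{2l}$ and $0$ otherwise, which establishes correctness.

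For privacy I would combine two observations. The privacy guarantee of Lemma~\ref{lemma:M2}, applied blockwise, says that the joint state the referee receives is determined entirely by the value of $s'$, each block's message being the fixed mixture associated with that block's pair of sums. Hence it suffices to show that the distribution of $s'$, over the choice of $c$, depends only on whether $s=0^{2l}$. When $s=0^{2l}$ we have $s'=0^{2l}$ deterministically; when $s\ne 0^{2l}$, the field element ${\sf p}(s)$ lies in $\mathbb{F}_{2^{2l}}^*$, so $c\cdot{\sf p}(s)$ is uniformly distributed over $\mathbb{F}_{2^{2l}}^*$ as $c$ ranges uniformly over $\mathbb{F}_{2^{2l}}^*$; that is, $s'$ is uniform over the nonzero strings. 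Thus the simulator, knowing only the output bit, reproduces the referee's state: on output $1$ it prepares the message for $s'=0^{2l}$, and on output $0$ it prepares the uniform mixture, over nonzero $s'$, of the corresponding $Sum_2$ messages.

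I expect the main obstacle to be the privacy analysis, in two respects. First, I must verify rigorously that the parallel composition of the $Sum_2$ protocols leaks nothing beyond $s'$; this requires that the GHZ resources and the $Sum_2$ shared-randomness strings be independent across the $l$ blocks, so that the blockwise privacy of Lemma~\ref{lemma:M2} composes into a joint message state that is a product of per-block mixtures depending only on $s'$. Second, I must check that the field-multiplication step is genuinely local and non-interactive, i.e. that each party can apply $M_c$ knowing only the shared $c$ and its own $x_j$. The odd-$k$ case is handled exactly as in Lemma~\ref{lemma:M2}, by letting $P_k$ simulate an extra party holding the all-zero input in every block, which leaves all sums unchanged and raises the cost to $(k+1)l$.
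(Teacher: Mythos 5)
Your proposal is correct and follows essentially the same route as the paper's proof of Lemma~\ref{thm:GEQ_2l}: run the $Sum_2$ protocol of Lemma~\ref{lemma:M2} blockwise in parallel with independent per-block resources, after each party locally multiplies its input by a shared uniformly random element of $\mathbb{F}_{2^{2l}}^*$ (your $c$ is the paper's $r'$), so that the sum the referee can infer is deterministically zero when $GEQ_{2l}=1$ and uniform over the nonzero strings otherwise. The correctness, privacy, and odd-$k$ arguments you outline all match the paper's.
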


\begin{proof}
We only demonstrate the case in which $k$ is even 
(as the odd case is considered similarly to the proof of Lemma~\ref{lemma:M2}). 
The quantum protocol is as follows.

\

\noindent
{\bf Protocol ${\cal P}_{GEQ_{2l}}$:}

0. All parties share the entangled state
\[
\bigotimes_{i=1}^l 
\left(
\frac{1}{\sqrt{2}}\left(\bigotimes_{j=1}^k |0\rangle_{Q_j^i} + \bigotimes_{j=1}^k |1\rangle_{Q_j^i}\right)
\right), 
\]
where the single-qubit registers $Q_j^1,\ldots,Q_j^l$ are owned by party $P_j$. 
Moreover, they share $l$ $k$-bit strings $r^i:=r_1^ir_2^i\cdots r_k^i$ such that $\sum_{j=1}^k r_j^i=0$ ($i=1,2,\ldots,l$), and a non-zero $2l$-bit string $r'=r'_1r'_2\cdots r'_{2l-1}r'_{2l}$.

1. Each party $P_j$ computes the $2l$-bit string $a_j=a_j^1a_j^2\cdots a_j^{2l}$ defined as ${\sf p}(a_j)={\sf p}(r'){\sf p}(x_j)$.

2. Each party $P_j$ applies $Z$ on $Q_j^i$ if $a_j^{2i}=1$. 
The resulting state is
\[
\bigotimes_{i=1}^l
\left(
\frac{1}{\sqrt{2}}\left(\bigotimes_{j=1}^k |0\rangle_{Q_j^i} + \bigotimes_{j=1}^k (-1)^{\sum_{j=1}^k a_j^{2i}} |1\rangle_{Q_j^i}\right)
\right).
\]

3. Each party $P_j$ applies $X$ on $Q_j^i$ if $a_j^{2i-1}\oplus r_j^i=1$. The resulting state is
\begin{equation}\label{eq:GEQ2l_step3}
\bigotimes_{i=1}^l
\left(
\frac{1}{\sqrt{2}}\left(\bigotimes_{j=1}^k |a_j^{2i-1}\oplus r_j^i \rangle_{Q_j^i} + (-1)^{\sum_{j=1}^k a_j^{2i}} \bigotimes_{j=1}^k |a_j^{2i-1}\oplus r_j^i \oplus 1\rangle_{Q_j^i}\right)
\right).
\end{equation}

4. Each party $P_j$ sends $l$ quantum registers $Q_j^1,\ldots,Q_j^l$ to $R$.

5. $R$ measures $kl$ quantum registers $Q_1^1,\ldots,Q_k^1,\ldots,Q_1^l,\ldots,Q_k^l$ in the basis 
$$
B:=\left\{
\bigotimes_{j=1}^l |\Phi(y_1^i,\ldots,y_{k-1}^i,z^i)\rangle
:~y_1^i\cdots y_{k-1}^iz^i\in\{0,1\}^k~\mbox{for every}~i\in [l]
\right\},
$$
and let $y_1^iy_2^i\cdots y_{k-1}^iz^i$ ($i=1,\cdots,l$) 
be the measurement results.

6. $R$ accepts if $\left(\sum_{j=1}^{k-1} y_j^i\right)=z^i=0$ 
for all $i=1,\cdots,l$ and rejects otherwise.

\

{\bf Correctness:} Note that $(\sum_{j=1}^k x_j^1, \ldots, \sum_{j=1}^k x_j^{2l})=(0,\ldots,0)$ if and only if \sloppy
$(\sum_{j=1}^k a_j^1, \ldots, \sum_{j=1}^k a_j^{2l})=(0,\ldots,0)$, since 
\begin{equation}\label{eq:GEQ2l_correctness}
{\sf p}\left((\sum_{j=1}^k a_j^1, \ldots, \sum_{j=1}^k a_j^{2l}) \right)
=
{\sf p}(r'){\sf p}\left( (\sum_{j=1}^k x_j^1, \ldots, \sum_{j=1}^k x_j^{2l}) \right).
\end{equation}
Now, the correctness of ${\cal P}_{Sum_2}$ in Lemma~\ref{lemma:M2} also guarantees the correctness of ${\cal P}_{GEQ_{2l}}$. 

{\bf Privacy:} First, we consider the case in which $GEQ_{2l}(x_1,x_2,\ldots,x_k)=0$: that is, $(\sum_{j=1}^k x_j^1, \ldots, \sum_{j=1}^k x_j^{2l})=(0,\ldots,0)$. Then, as demonstrated in Eq.~(\ref{eq:GEQ2l_correctness}), $(a_1,a_2,\ldots, a_k)$ satisfies 
$(\sum_{j=1}^k a_j^1, \ldots, \sum_{j=1}^k a_j^{2l})=(0,\ldots,0)$.
Moreover, $(a_1^{2i-1},a_2^{2i-1},\ldots,a_k^{2i-1})$ is uniformly randomized by $r^i$ in Step 3 under the restriction that the sum is $0$.  
Thus, the quantum state that $R$ receives (represented by Eq.~(\ref{eq:GEQ2l_step3})) is taken from the set of $2^{(k-2)l}$ orthogonal pure states 
$$
\left\{
\bigotimes_{i=1}^l |\Phi(y_1^i,\ldots,y_{k-1}^i,0)\rangle:~\sum_{j=1}^{k-1} y_j^i=0~\mbox{for every}~i\in [l] 
\right\}
$$ 
(up to the total phase) uniformly at random. 
Second, we consider the case in which $GEQ_{2l}(x_1,x_2,\ldots,x_k)=1$, 
i.e., $(\sum_{j=1}^k x_j^1, \ldots, \sum_{j=1}^k x_j^{2l})$ is in 
the set 
\[
S:=\{(b_1,\ldots,b_{2l}):~b_1\cdots b_{2l}\in\{0,1\}^{2l}\} \setminus\{(0,\ldots,0)\}.
\]
Then, by Eq.~(\ref{eq:GEQ2l_correctness}), 
$(a_1,a_2,\ldots, a_k)$ is taken so that  
$(\sum_{j=1}^k a_j^1, \ldots, \sum_{j=1}^k a_j^{2l})$ can be distributed from $S$ uniformly at random. 
Moreover, $(a_1^{2i-1},a_2^{2i-1},\ldots,a_k^{2i-1})$ is uniformly randomized by $r^i$ in Step 3 under the restriction that the sum remains the same (since $\sum_{j=1}^k r_j^i=0$). 
Thus, the quantum state that $R$ receives (represented by Eq.~(\ref{eq:GEQ2l_step3})) is taken from the set of 
$(2^{2l}-1) 2^{(k-2)l}$ 
orthogonal pure states 
\[
B\setminus
\left\{
\bigotimes_{i=1}^l |\Phi(y_1^i,\ldots,y_{k-1}^i,0)\rangle:~\sum_{j=1}^{k-1} y_j^i=0~\mbox{for every}~i\in [l] 
\right\}
\] 
(up to the total phase) uniformly at random. 
\end{proof}

Now Lemma~\ref{thm:GEQ_2l} provides the upper bound $kn/2$ of Theorem~\ref{thm:upperbound}. 
The lower bound $kn$ of Theorem~\ref{thm:upperbound} originates from the lower bound $n$ of the exact (two-party) one-way quantum communication complexity with no shared entanglement of the $n$-bit equality function (see, e.g.,~\cite[Theorem 5.11]{Kla07}) as it implies that for any $j\in [k]$, the $j$th party must send $n$ qubits 
(considering the one-way communication setting from the $j$th party with input $x\in\{0,1\}^n$ to the group of the referee and the other $k-1$ parties in which one party has input $y\in\{0,1\}^n$ and the $k-2$ remaining parties have input $0^n$, the length of the message of the $j$th party must be $n$).  
This completes the proof of Theorem~\ref{thm:upperbound}.

Actually, the upper bound $kl$ of Lemma~\ref{thm:GEQ_2l} for $GEQ_{2l}$ 
 is tight when $k$ is even. 
The matching lower bound $kl$ originates from the lower bound $l$ of the exact one-way quantum communication complexity with shared entanglement of the $2l$-bit equality function 
shown by Klauck~\cite[Theorem 5.12]{Kla07} as it   implies that each party must send $l$ qubits. 

\section{Power of Shared Entanglement for Partial Functions}

In this section, we prove Theorem~\ref{thm:exp-gap-DJ}. We consider the so-called distributed Deutsch-Jozsa problem $DJ_n$ introduced by Brassard, Cleve, and Tapp~\cite{BCT99PRL}. 
First we show that $C_0^{psm,*}(DJ_n)=O(\log n)$. 
Our PSM protocol is based on the protocol provided in~\cite{BCT99PRL}, 
which we modify so that the privacy condition can be satisfied. 
Second we show $Q_0^{psm}(DJ_n)=\Omega(n)$ 
by observing that the fact that the exact classical and quantum SMP communication complexities are the same for total functions can be extended to the case of partial functions. 

Let $n$ be any power of $2$. The distributed Deutsch-Jozsa problem $DJ_n:\{0,1\}^n\times\{0,1\}^n\rightarrow\{0,1\}$, introduced in~\cite{BCT99PRL}, is defined as
\[
DJ_n(x,y)=
\left\{
\begin{array}{ll}
1 & \mbox{if}~~x=y \\
0 & \Delta(x,y)=n/2, 
\end{array} 
\right.
\]
where $\Delta(x,y)$ denotes the Hamming distance between $x=x_0x_1\cdots x_{n-1}$ and $y=y_0y_1\cdots y_{n-1}$.

\begin{lemma}\label{thm:DJ}
There is a PSM protocol with shared entanglement that solves $DJ_n$ with probability $1$, and the classical communication complexity is $2\log n$. 
\end{lemma}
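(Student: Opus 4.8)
The plan is to take the shared-entanglement distributed Deutsch--Jozsa protocol of Brassard, Cleve, and Tapp and add a layer of shared classical randomness (available for free in the shared-entanglement model) that destroys all input-dependent information in the referee's view. Write $m=\log n$ and identify $x,y\in\{0,1\}^n$ with functions on $\{0,1\}^m$ via $z\mapsto x_z$ and $z\mapsto y_z$; set $f(z)=x_z\oplus y_z$, so the promise says either $f\equiv 0$ (when $x=y$) or $f$ is balanced (when $\Delta(x,y)=n/2$). The two parties share the maximally entangled state $\frac{1}{\sqrt n}\sum_z |z\rangle_A|z\rangle_B$ on $2m$ qubits together with a uniformly random invertible matrix $M\in GL(m,\mathbb{F}_2)$. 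Then $P_1$ applies the phase $(-1)^{x_{Mz}}$ conditioned on $|z\rangle$, applies $H^{\otimes m}$ to its register, measures to obtain $a\in\{0,1\}^m$, and sends $a$; $P_2$ does the same with $(-1)^{y_{Mz}}$ to obtain and send $b$. The referee outputs $1$ if $a=b$ and $0$ otherwise. Since each message is an $m$-bit string, the communication is $2\log n$.

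First I would verify correctness by the standard Deutsch--Jozsa computation: the amplitude of the joint outcome $(a,b)$ equals $\frac{1}{n^{3/2}}\sum_z(-1)^{f(Mz)+(a\oplus b)\cdot z}$, which depends on $(a,b)$ only through $c:=a\oplus b$. Because $z\mapsto Mz$ is a bijection, $f\circ M$ is balanced iff $f$ is, so the $c=0$ amplitude $\frac{1}{n^{3/2}}\sum_z(-1)^{f(Mz)}$ equals $1/\sqrt n$ when $x=y$ and $0$ when $\Delta(x,y)=n/2$. Hence $a=b$ with certainty in the first case and $a\neq b$ with certainty in the second, giving perfect correctness.

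Then I would establish privacy, which is the main obstacle and the reason the raw protocol must be modified. For a fixed $M$ the joint law is $\Pr[(a,b)]=\frac1n|\widehat{f\circ M}(c)|^2$ with $c=a\oplus b$; summing over $b$ shows the marginal of $a$ is uniform for every input, and the conditional law of $c$ given $a$ is $|\widehat{f\circ M}(c)|^2$, independent of $a$. The key point is the Fourier identity $\widehat{f\circ M}(c)=\pm\,\hat f(M^{-\mathsf{T}}c)$, so averaging over a uniform $M\in GL(m,\mathbb{F}_2)$ and using transitivity of $GL(m,\mathbb{F}_2)$ on nonzero vectors together with Parseval ($\sum_c|\hat f(c)|^2=1$, and $\hat f(0)=0$ in the balanced case) makes $c$ uniform over the nonzero strings when $\Delta(x,y)=n/2$, and leaves $c=0$ deterministically when $x=y$. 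Consequently the referee's pair $(a,b)$ is uniform over $\{(a,a)\}$ when the output is $1$ and uniform over $\{(a,b):a\neq b\}$ when the output is $0$; both depend only on $DJ_n(x,y)$, so the simulator that samples these two distributions establishes perfect privacy. The delicate step is precisely this averaging argument: one must check that the single random $GL$-transformation simultaneously uniformizes the difference $c$ while preserving the uniform, independent marginal of $a$, so that no residual information about $f$ survives in the referee's view.
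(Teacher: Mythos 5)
Your protocol is correct, but it achieves privacy by a genuinely different mechanism from the paper's. The paper runs the raw Brassard--Cleve--Tapp protocol, measures to get outcomes $K,L\in\{0,1\}^m$, and then \emph{post-processes} them with a shared random affine permutation of the field $\mathbb{F}_{2^m}$, sending ${\sf p}(m_A)={\sf p}(r){\sf p}(K)+{\sf p}(r')$ and ${\sf p}(m_B)={\sf p}(r){\sf p}(L)+{\sf p}(r')$; privacy then follows from the pairwise-independence of the affine family (a fixed pair of distinct outcomes is mapped to a uniformly random distinct pair, and an equal pair to a uniformly random equal pair), with no Fourier analysis needed. You instead \emph{pre-process} the inputs with a shared random $M\in GL(m,\mathbb{F}_2)$ and send the measurement outcomes unchanged, relying on the covariance $\widehat{f\circ M}(c)=\hat f(M^{-\mathsf{T}}c)$ (the $\pm$ you wrote is unnecessary for a linear change of variables, and harmless since you square it), on Parseval to get the uniform marginal of $a$, and on transitivity of $GL(m,\mathbb{F}_2)$ on nonzero vectors to uniformize $c=a\oplus b$ in the balanced case. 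Both arguments are sound and yield the same simulator distributions (uniform over $\{(a,a)\}$ versus uniform over $\{(a,b):a\neq b\}$) and the same $2\log n$ bits of communication; the paper's route is somewhat more elementary and uses only $2m$ shared random bits versus your $\Theta(m^2)$, though shared randomness is not charged in this model, while your route has the mild aesthetic advantage that the referee receives the untouched Deutsch--Jozsa outcomes.
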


\begin{proof}
The PSM protocol is as follows.

\

\noindent
{\bf Protocol ${\cal P}_{DJ}$:} Let $n=2^m$.

0. $P_1$ and $P_2$ share the entangled state
\[
\frac{1}{\sqrt{n}} \sum_{i\in\{0,1\}^{m}} |i\rangle_A|i\rangle_B
\]
and the two prearranged random $m$-bit strings $r\in\{0,1\}^m\setminus\{0^m\}$ and $r'\in\{0,1\}^m$.

1. $P_1$ (resp.~$P_2$) adds phase $(-1)^{x_i}$ ($(-1)^{y_i}$) to the $m$-qubit register $A$ ($B$) if the content of $A$ ($B$) is $i$ (where $i\in\{0,1\}^m$ is identified as the corresponding non-negative integer). 
The resulting state is
\[
\frac{1}{\sqrt{n}} \sum_{i\in\{0,1\}^{m}} (-1)^{x_i}|i\rangle_A (-1)^{y_i}|i\rangle_B.
\]

2. $P_1$ and $P_2$ apply the Hadamard gate $H$ for each qubit 
of their registers $A$ and $B$, respectively. The resulting state is
\begin{equation}\label{eq:DJ_step2}
\frac{1}{n\sqrt{n}} \sum_{i\in\{0,1\}^{m}} 
\left(
(-1)^{x_i}\sum_{k\in\{0,1\}^{m} } (-1)^{i\cdot k} |k\rangle_A
\right)
\left(
(-1)^{y_i}\sum_{l\in\{0,1\}^{m} } (-1)^{i\cdot l} |l\rangle_B
\right).
\end{equation}

3. $P_1$ and $P_2$ measure $A$ and $B$ in the computational basis, respectively, 
and let $K$ and $L$ be the resulting bit strings in $\{0,1\}^{m}$.

4. $P_1$ and $P_2$ send classical messages $m_A$ and $m_B$ defined as ${\sf p}(m_A)={\sf p}(r){\sf p}(K)+{\sf p}(r')$ 
and ${\sf p}(m_B)={\sf p}(r){\sf p}(L)+{\sf p}(r')$ to $R$, respectively.

5. $R$ accepts if $m_A=m_B$ and rejects otherwise.

\

{\bf Correctness:} 
Note that the amplitude of $|k\rangle_A|l\rangle_B$ in Eq.~(\ref{eq:DJ_step2}) 
is
\begin{equation}\label{eq:DJ_step3}
\frac{1}{n^{3/2}} \sum_{i\in\{0,1\}^{m}} (-1)^{x_i\oplus y_i} (-1)^{i\cdot (k\oplus l)}.
\end{equation}
When $x=y$, Eq.~(\ref{eq:DJ_step3}) is $0$ if $K\neq L$. 
Thus, the event $K=L$ occurs with probability $1$; therefore, $R$ always accepts.
When $\Delta(x,y)=n/2$, Eq.~(\ref{eq:DJ_step3}) is $0$ if $K=L$. 
Thus, the event $K\neq L$ occurs with probability $1$; therefore, $R$ always rejects.

{\bf Privacy:}
Again, by Eq.~(\ref{eq:DJ_step3}),
if $x=y$, then $K=L=k$ is obtained with $1/n$ for each $k\in\{0,1\}^m$.
Thus, the simulator can simulate the messages 
by generating the same $m$-bit string chosen uniformly at random as $P_1$'s and $P_2$'s messages.   
If $\Delta(x,y)=n/2$, 
the element ${\sf p}(K)-{\sf p}(L)$ in $\mathbb{F}_{2^m}$ is nonzero; thus, the difference between ${\sf p}(r){\sf p}(K)+{\sf p}(r')$ and ${\sf p}(r){\sf p}(L)+{\sf p}(r')$ is distributed uniformly at random in $\mathbb{F}_{2^m}^*$. 
Moreover, ${\sf p}(K)$ (and ${\sf p}(L)$) is distributed uniformly at random in $\mathbb{F}_{2^m}$ by multiplying ${\sf p}(r)$ and adding ${\sf p}(r')$.
Thus, the simulator can simulate the messages by choosing two different $m$-bit non-zero strings uniformly at random as $P_1$'s and $P_2$'s messages. 
\end{proof}

Using the result in~\cite{BCW98STOC} that $DJ_n$ has the exact classical communication complexity $\Omega(n)$ (even in the two-way communication model), 
we can show that $DJ_n$ provides the following exponential separation between exact PSMs with shared entanglement and exact PSQMs without shared entanglement. (Note that Theorem~\ref{thm:PSM-DJ} implies Theorem~\ref{thm:exp-gap-DJ}, namely, an exponential gap between $Q^{psm,*}_0(DJ_n)$ and $Q^{psm}_0(DJ_n)$, as well as between $C^{psm,*}_0(DJ_n)$ and $C^{psm}_0(DJ_n)$.)

\begin{theorem}\label{thm:PSM-DJ}
$C^{psm,*}_0(DJ_n)=O(\log n)$ and $Q^{psm}_0(DJ_n)=\Omega(n)$.
\end{theorem}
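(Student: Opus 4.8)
The plan is to treat the two halves of the statement separately, since the upper bound is immediate from the protocol already built and essentially all of the content lies in the lower bound. The half $C^{psm,*}_0(DJ_n)=O(\log n)$ is exactly Lemma~\ref{thm:DJ}: the protocol $\mathcal{P}_{DJ}$ is an exact PSM protocol with shared entanglement whose messages $m_A,m_B$ are $\log n$ classical bits each, so the total classical communication is $2\log n=O(\log n)$. Nothing further is needed there.

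For $Q^{psm}_0(DJ_n)=\Omega(n)$ I would proceed by a chain of reductions that strips a PSQM protocol down to a bare exact quantum SMP protocol, and then argue that quantum messages buy nothing in the exact SMP setting. First, dropping the privacy requirement only enlarges the class of admissible protocols, so $Q^{psm}_0(DJ_n)\ge Q^{\mathrm{smp}}_0(DJ_n)$, the exact quantum SMP complexity with shared randomness but no entanglement. Next, because the protocol is exact, its (zero) error averaged over the shared random string forces zero error for every fixed value of that string; hence one may hard-wire the best value of the shared randomness without altering any message length, reducing to an exact quantum SMP protocol with no shared resources at all. The remaining task is to show that this quantity is $\Omega(n)$.

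The key step is to extend from total to partial functions the fact that exact quantum and exact classical SMP complexities coincide, i.e.\ that quantum messages give no advantage for exact SMP. The mechanism replacing the classical ``distinct messages'' argument is an orthogonality observation (in the same spirit in which purity replaced collision probability in Lemma~\ref{lem:LBmain}): if two inputs $x,x'$ of $P_1$ are \emph{distinguishable}, meaning there is a $y$ with $DJ_n(x,y)$ and $DJ_n(x',y)$ both defined and unequal, then exact correctness forces the referee (whose measurement we may take to be projective, as in Claim~\ref{clm:referee is PVM}) to separate $\rho_1(x)\otimes\rho_2(y)$ from $\rho_1(x')\otimes\rho_2(y)$ with certainty, which is possible only if $\rho_1(x)$ and $\rho_1(x')$ have orthogonal supports; symmetrically for $P_2$. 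For a total function ``distinguishable'' is the complement of the equivalence ``identical row,'' so these orthogonalities pin the dimension of $P_1$'s register to the number of distinct rows and recover the classical value exactly. Carrying the argument through for the \emph{partial} function $DJ_n$ should yield an exact classical SMP protocol of no larger cost, giving $Q^{\mathrm{smp}}_0(DJ_n)\ge C^{\mathrm{smp}}_0(DJ_n)$. Finally, since SMP is the most restrictive of the standard models, $C^{\mathrm{smp}}_0(DJ_n)\ge C_0(DJ_n)=\Omega(n)$, the two-way exact classical bound of Buhrman, Cleve, and Wigderson~\cite{BCW98STOC}; chaining the inequalities gives $Q^{psm}_0(DJ_n)=\Omega(n)$.

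I expect the main obstacle to be precisely this partial-function extension. For $DJ_n$ the distinguishability relation on $P_1$'s inputs is exactly ``Hamming distance $n/2$,'' which is \emph{not} transitive and whose cliques (equidistant codes) have size only $O(n)$; consequently the pairwise orthogonality bound alone yields merely $\Omega(\log n)$, far short of the target. What must be exploited is the stronger, global consequence of exactness, namely that the subspace onto which the referee projects contains every ``diagonal'' state $\rho_1(x)\otimes\rho_2(x)$ while being orthogonal to every ``distance-$n/2$'' state $\rho_1(x)\otimes\rho_2(y)$, so that for every third input $x'$ one obtains the product constraint $\langle\psi_1(x')|\psi_1(x)\rangle\cdot\langle\psi_2(x')|\psi_2(y)\rangle=0$ whenever $\Delta(x,y)=n/2$. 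Converting this cross-condition into the full classical SMP lower bound, rather than the weak clique bound, is the delicate point, and it is exactly where the non-transitivity of distinguishability for partial functions has to be handled carefully.
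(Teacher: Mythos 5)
The upper-bound half, $C^{psm,*}_0(DJ_n)=O(\log n)$, is handled exactly as in the paper, by invoking Lemma~\ref{thm:DJ}; nothing to add there. For the lower bound your route is also the paper's in outline: discard privacy, fix the shared randomness, observe that exact correctness forces the supports of $\rho_1(x)$ and $\rho_1(x')$ to be orthogonal whenever some $y$ witnesses $DJ_n(x,y)\neq DJ_n(x',y)$ with both values defined, and then try to transfer the $\Omega(n)$ exact classical bound of \cite{BCW98STOC}. But your proposal stops at the one step that carries all the weight, and you say so yourself. As written, your argument establishes only $Q^{psm}_0(DJ_n)=\Omega(\log n)$: the conflict relation on one party's inputs is ``Hamming distance exactly $n/2$,'' whose cliques are equidistant codes of size $O(n)$ (Plotkin bound; Hadamard rows show this is tight up to a constant), and in fact the map $x\mapsto n^{-1/2}\bigl((-1)^{x_1},\ldots,(-1)^{x_n}\bigr)$ is itself an orthogonal representation of this conflict graph in dimension $n$, so \emph{no} argument using only pairwise orthogonality of a single party's messages can force dimension beyond $n$, i.e.\ beyond $\log n+O(1)$ qubits per party. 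The global cross-condition you write down at the end (the referee's projector must contain every state $\rho_1(x)\otimes\rho_2(x)$ and annihilate every $\rho_1(x)\otimes\rho_2(y)$ with $\Delta(x,y)=n/2$) is indeed what any correct proof has to exploit, but you do not carry that step out, so the claimed $\Omega(n)$ bound is not established.

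You should be aware, though, that the paper's own proof is no more explicit at precisely this point. It asserts that the total-function fact (exact quantum and exact classical SMP complexities both equal the log of the number of distinct rows plus distinct columns, \cite[p.142]{Wol01}) survives for partial functions ``by replacing the number of distinct rows by the maximum clique'' of the conflict graph, and then imports the classical $\Omega(n)$ bound. Since that clique number is $O(n)$ while the exact \emph{classical} SMP complexity of $DJ_n$ really is $\Omega(n)$ (each fixed-randomness message function must properly color the conflict graph, whose chromatic number is $2^{\Omega(n)}$ by Frankl--R\"{o}dl --- exactly how \cite{BCW98STOC} obtain their two-way bound), the clique quantity cannot be the common value of the two complexities for this partial function, and the inequality the theorem actually needs --- that an exact quantum SMP protocol yields an exact classical one of no greater cost, or directly $Q^{psm}_0(DJ_n)=\Omega(n)$ --- does not follow from pairwise orthogonality alone. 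So the obstacle you flag is real and is the crux of the theorem; your proposal is incomplete at the same place the paper is terse, but you have correctly located the soft spot.
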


\begin{proof}
The upper bound $C^{psm,*}_0(DJ_n)=O(\log n)$ is shown by Lemma~\ref{thm:DJ}.

The lower bound comes from the fact that both the exact quantum and classical SMP communication complexities of a total function $f$ over $X\times Y$ are equal to the sum of the number of the different row vectors of the communication matrix of $f$, $M_f$, and the number of the different column vectors of $M_f$ (this fact can be found  
in \cite[p.142]{Wol01}).
The proof idea is that any two (classical or quantum) messages $m_x$ and $m_{x'}$ corresponding to different row vectors indexed with input $x$ and $x'$ must be perfectly distinguished since there is some column input $y$ such that $M_f(x,y)\neq M_f(x',y)$ (and a similar argument holds for different column vectors), 
and 
choosing different messages for such different row vectors or column vectors 
is sufficient for the referee to compute $f$ exactly.  
This proof idea also holds for a partial function by replacing the number of the different row (resp.~column) vectors 
by the size of the maximum clique of the graph $G_1(M_f)=(X,E_{1,f})$ (resp.~$G_2(M_f)=(Y,E_{2,f})$) defined as follows: 
two rows $x$ and $x'$ (resp.~columns $y$ and $y'$) have an edge in $E_{1,f}$ (resp.~$E_{2,f}$) 
if and only if there is a column $y$ (resp.~row $x$) such that $(x,y)$ and $(x',y)$ (resp.~$(x,y)$ and $(x,y')$) 
are in the domain of the partial function $f$ and $M_f(x,y)\neq M_f(x',y)$ (resp.~$M_f(x,y)\neq M_f(x,y')$).

The above observation implies that the exact quantum and classical SMP communication complexities of the partial function $DJ_n$ are the same. By the result in~\cite{BCW98STOC}, $DJ_n$ has the exact classical communication complexity $\Omega(n)$. This concludes the desired lower bound $Q^{psm}_0(DJ_n)=\Omega(n)$.
\end{proof}

Theorem~\ref{thm:PSM-DJ} provides an exponential gap between PSMs with shared entanglement and PSQMs without shared entanglement for partial functions; however, it is obtained only in the {exact} setting, and we do not know whether this exponential gap can be obtained in the bounded-error setting for partial or total functions. 
However, we can observe that there is a relational problem that has an exponential gap between PSMs with shared entanglement and PSQMs without shared entanglement in the bounded-error setting from the result by Gavinsky et al.~\cite{GKRW09SICOMP}. They demonstrated that the problem has an exponentially smaller communication complexity of a classical SMP protocol with shared entanglement than the communication complexity of quantum SMP protocols only with shared randomness, 
while it is easy to see that their protocol is in fact a PSQM.
\section{Conclusion}

This paper introduced a quantum analogue of the well-studied PSM model which was called the PSQM model, and provided several initial results in the exact setting.
Here we list a number of open problems.

\begin{itemize}
    \item Can the lower bound of Theorem~\ref{thm:lowerbound} be extended to the bounded-error setting or to the shared entanglement case? 
    \item Can any non-trivial communication complexity gap between the PSM model and the PSQM model for some function in the bounded-error setting? How about any non-trivial communication complexity gap between the PSQM model with shared entanglement and the PSQM model without it in the bounded-error setting?
    \item The PSQM model in this paper was defined only for perfect privacy. What results are obtained for imperfect privacy? To extend the lower bound of Theorem~\ref{thm:lowerbound} to imperfect privacy, a quantumly tailored modification of \cite[Section 5]{AHMS20JCRYPT} might be worth considering, while it seems much more complicated than the proof of Theorem~\ref{thm:lowerbound}. 
\end{itemize}




\appendix

\section{Proofs of Claim~\ref{clm:purity upper bound}, Claim~\ref{clm:referee is PVM}, and Lemma~\ref{lem:weight}}\label{sec:appendix}

In this appendix, 
 we give the detailed proofs of the technical claims 
 and lemma used in the proof of Lemma~\ref{lem:LBmain}.
\ignore{
\begin{claim}\label{clm:purity upper bound}
\begin{align*}
 \tr\rho^2 \le \beta(\mu)^{-1}\tr\sum_{(x_1,x_2)\ne(x^\prime_1,x^\prime_2)}\mu(x_1,x_2)\rho(x_1,x_2)\mu(x^\prime_1,x^\prime_2)\rho(x^\prime_1,x^\prime_2).
\end{align*}
\end{claim}
}

\begin{proof}[Proof of Claim \ref{clm:purity upper bound}]
For any two quantum states $\rho,\sigma$, 
the condition $\rho\sigma=0$ is necessary (and sufficient) for perfect discrimination of the two states 
(see, e.g., Proposition 5.13 in \cite{HIK+15}).
Since the referee $R$ perfectly discriminates 
$\rho(x_1,x_2)$ and $\rho(x^\prime_1,x^\prime_2)$
for which $F_n(x_1,x_2)\ne F_n(x^\prime_1,x^\prime_2)$ from the correctness, 
it must hold that $\rho(x_1,x_2)\rho(x^\prime_1,x^\prime_2)=0$.
Then, we have 
\begin{align*}
\lefteqn{ \tr\sum_{(x_1,x_2)\ne(x^\prime_1,x^\prime_2)}\mu(x_1,x_2)\rho(x_1,x_2)\mu(x^\prime_1,x^\prime_2)\rho(x^\prime_1,x^\prime_2)}\\
 &= \sum_y\tr\sum_{\substack{(x_1,x_2)\ne(x^\prime_1,x^\prime_2)\\ F_n(x_1,x_2)= F_n(x^\prime_1,x^\prime_2)=y}}\mu(x_1,x_2)\rho(x_1,x_2)\mu(x^\prime_1,x^\prime_2)\rho(x^\prime_1,x^\prime_2).
\end{align*}

From the privacy, there exists a quantum state $\rho_y$ for each $y$ such that 
$\rho_y = \rho(x_1,x_2)$ for every $(x_1,x_2)$ for which $F_n(x_1,x_2)=y$.
Therefore, 
\begin{align*}
 & \sum_y\tr\sum_{\substack{(x_1,x_2)\ne(x^\prime_1,x^\prime_2)\\ F_n(x_1,x_2)= F_n(x^\prime_1,x^\prime_2)=y}}\mu(x_1,x_2)\rho(x_1,x_2)\mu(x^\prime_1,x^\prime_2)\rho(x^\prime_1,x^\prime_2) \\
 & = \sum_y \tr\rho_y^2 \sum_{\substack{(x_1,x_2)\ne(x^\prime_1,x^\prime_2)\\ F_n(x_1,x_2)= F_n(x^\prime_1,x^\prime_2)=y}}\mu(x_1,x_2)\mu(x^\prime_1,x^\prime_2)\\
 & = 
 \sum_y \frac{\sum_{\substack{(x_1,x_2)\ne(x^\prime_1,x^\prime_2)\\ F_n(x_1,x_2)= F_n(x^\prime_1,x^\prime_2)=y}}\mu(x_1,x_2)\mu(x^\prime_1,x^\prime_2)}{\sum_{F_n(x_1,x_2)=F_n(x^\prime_1,x^\prime_2)=y}\mu(x_1,x_2)\mu(x^\prime_1,x^\prime_2)}
 \tr\rho_y^2 \!\!
 \sum_{F_n(x_1,x_2)=F_n(x^\prime_1,x^\prime_2)=y}\!\!\!\!\!\!\!\!\!\!\!\!\mu(x_1,x_2)\mu(x^\prime_1,x^\prime_2)\\
& = 
 \sum_y \CProb{}{(X_1,X_2)\ne (X^\prime_1,X^\prime_2)}{F_n(X_1,X_2)=F_n(X^\prime_1,X^\prime_2)=y} 
 \tr\rho_y^2 \\
 & \ \ \ \ \ \ \ \ \times
 \sum_{F_n(x_1,x_2)=F_n(x^\prime_1,x^\prime_2)=y}\mu(x_1,x_2)\mu(x^\prime_1,x^\prime_2)\\
 & \ge \beta(\mu) \sum_y \tr\rho_y^2 
 \sum_{F_n(x_1,x_2)=F_n(x^\prime_1,x^\prime_2)=y}\mu(x_1,x_2)\mu(x^\prime_1,x^\prime_2)\\  
 &= \beta(\mu) \tr 
 \sum_{x_1,x_2,x^\prime_1,x^\prime_2}\mu(x_1,x_2)\rho(x_1,x_2)\mu(x^\prime_1,x^\prime_2)\rho(x^\prime_1,x^\prime_2).
\end{align*}
\end{proof}
\ignore{
\begin{claim}\label{clm:referee is PVM}
The referee $R=\set{R_y}_{y\in\bit}$ is a PVM. 
\end{claim}
} 

\begin{proof}[Proof of Claim \ref{clm:referee is PVM}]
From the privacy, there exists $\rho_y$ such that $\rho_y=\rho(x_1,x_2)$ 
for every $y\in\bit$ and every $(x_1,x_2)$ for which $F_n(x_1,x_2)=y$. 
From the correctness and the necessary condition of the perfect quantum state discrimination,
$\rho_0\rho_1=0$ must hold. 
From the spectral decomposition we have 
$\rho_y = \sum_i \lambda_{y,i} \ketbra{\phi_{y,i}}{\phi_{y,i}}$ 
for some orthonormal basis $\set{\ket{\phi_{y,i}}}_{i}$ ($\lambda_{y,i}>0$).
Then, we have $\braket{\phi_{0,i}}{\phi_{1,j}}=0$ for every $i,j$ since $\rho_0\rho_1=0$.
Therefore, we can assume $R_0=\sum_i\ketbra{\phi_{0,i}}{\phi_{0,i}}$ and $R_1=I-R_0$ without loss of generality. This $R=\set{R_y}_{y\in\bit}$ is a PVM.
\end{proof}

\ignore{
\begin{lemma}\label{lem:weight}
If $F_n$ is non-degenerate, 
we have 
\[
 \sum_{z\ne x_1} \lvert\braket{\psi_1(x_1;r)}{\psi_1(z;r^\prime)}\rvert^2 \le 1
\]
for every $r,r^\prime$ and every $x_1$.
Similarly
\[
 \sum_{z} \lvert\braket{\psi_2(x_2;r)}{\psi_2(z;r^\prime)}\rvert^2 \le 1
\]
for every $r,r^\prime$ and every $x_2$.
\end{lemma}
}

\begin{proof}[Proof of Lemma \ref{lem:weight}]
We demonstrate that $\lvert\braket{\psi_1(z;r^\prime)}{\psi_1(z^\prime;r^\prime)}\rvert^2=0$
for every distinct $z,z^\prime$ and every $r^\prime$. Assuming this, we can decompose 
\[
 \ket{\psi_1(x_1;r)} = \sum_{z^\prime}\alpha_{z^\prime}\ket{\psi_1(z^\prime;r^\prime)} + \alpha^\bot\ket{\psi_1^\bot}
\]
with orthonormal vectors $\set{\ket{\psi_1(z^\prime;r^\prime)}}_{z^\prime}\cup\set{\ket{\psi^\bot_1}}$ 
for some coefficients $\alpha_{z^\prime}$ and $\alpha^\bot$.
Then, it holds that 
\begin{align*}
 \sum_{z\ne x_1}\lvert\braket{\psi_1(x_1;r)}{\psi_1(z;r^\prime)}\rvert^2 
 &= \sum_{z\ne x_1}\left\lvert \sum_{z^\prime} \alpha_{z^\prime}^*\braket{\psi_1(z^\prime;r^\prime)}{\psi_1(z;r^\prime)}
 + \alpha^{\bot*}\braket{\psi^\bot_1}{\psi_1(z;r^\prime)}\right\rvert^2\\
 &= \sum_{z\ne x_1}\lvert\alpha_z\rvert^2 \le 1.
\end{align*}

Therefore, it suffices to demonstrate
$\lvert\braket{\psi_1(z;r^\prime)}{\psi_1(z^\prime;r^\prime)}\rvert^2=0$
for every distinct $z,z^\prime$ and every $r^\prime$.
For contradiction, assume that $\lvert\braket{\psi_1(z;r^\prime)}{\psi_1(z^\prime;r^\prime)}\rvert^2>0$ 
for some $z,z^\prime$ and some $r^\prime$. 

From this assumption, we have
\[
 \ket{\psi_1(z^\prime;r^\prime)} = \beta\ket{\psi_1(z;r^\prime)} + \beta^\bot\ket{\psi^\bot_1(z;r^\prime)},
\]
where $\beta\ne 0$ and 
$\ket{\psi^\bot_1(z;r^\prime)}$
is orthogonal to $\ket{\psi_1(z;r^\prime)}$.

We fix $x_2$ arbitrarily.
Then, we have
\[
 \ket{\psi_1(z^\prime;r^\prime)}\ket{\psi_2(x_2;r^\prime)} = \beta\ket{\psi_1(z;r^\prime)}\ket{\psi_2(x_2;r^\prime)} + \beta^\bot\ket{\psi^\bot_1(z;r^\prime)}\ket{\psi_2(x_2;r^\prime)}.
\]

Since $R$ is a PVM, we have  
\[
 R_{F_n(z,x_2)}\ket{\psi_1(z;r^\prime)}\ket{\psi_2(x_2;r^\prime)} = \ket{\psi_1(z;r^\prime)}\ket{\psi_2(x_2;r^\prime)}
\]
from the correctness. We also have
\[
 \ket{\psi^\bot_1(z;r^\prime)} = \gamma \ket{\phi(z;r^\prime)} + \gamma^\bot \ket{\phi^\bot(z;r^\prime)},
\]
\sloppy where $\ket{\psi_1(z;r^\prime)}, \ket{\phi(z;r^\prime)}, \ket{\phi^\bot(z;r^\prime)}$
are orthogonal to each other, and $R_{F_n(z,x_2)}\ket{\phi(z;r^\prime)}\ket{\psi_2(x_2;r^\prime)} = \ket{\phi(z;r^\prime)}\ket{\psi_2(x_2;r^\prime)}$, 
$R_{F_n(z,x_2)}\ket{\phi^\bot(z;r^\prime)}\ket{\psi_2(x_2;r^\prime)} = 0$.

Therefore, it holds that 
\[
  \left\lvert \bra{\psi_1(z;r^\prime)}\bra{\psi_2(x_2;r^\prime)}R_{F_n(z,x_2)}\ket{\psi_1(z^\prime;r^\prime)}\ket{\psi_2(x_2;r^\prime)}\right\rvert^2 = \lvert\beta\rvert^2 + \lvert\gamma\rvert^2 > 0.
\]

The value $\lvert\beta\rvert^2+\lvert\gamma\rvert^2>0$ must be $1$ from the correctness; therefore,  $R(\ket{\psi_1(z^\prime;r^\prime)}\ket{\psi_2(x_2;r^\prime)})$ $=F_n(z^\prime,x_2)$ for every $x_2$ and every $r^\prime$.
This implies that $F_n(z,x_2)=F_n(z^\prime,x_2)$ holds for every $x_2$, which contradicts the non-degeneracy of $F_n$. The same argument also works for $x_2$.
\end{proof}

\end{document}